\theoremstyle{plain}
\newtheorem{theorem}{Theorem}
\newtheorem{lemma}[theorem]{Lemma}
\newtheorem{proposition}[theorem]{Proposition}
\theoremstyle{remark}
\newtheorem{remark}[theorem]{Remark}
\newcommand{\rang}[2]{{[#1\negthinspace:\negthinspace #2]}}
\def\Pr{\mathbb{P}}
\def\Ex{\mathbb{E}}
\def\Ind{\mathbb{I}}
\def\e{{\rm e}}
\def\p{^\prime}
\def\d{{\rm d}}
\def\rh{\varrho}
\def\tmin{t^{\rm min}} 
\def\tmax{t^{\rm max}} 
\def\tobs{t^{\rm obs}}
\def\jump{J}
\def\ss{s^*}
\def\sss{s^{\dagger}}
\def\tt{t^*}
\def\gu{g^{\rm max}}
\def\gl{g^{\rm min}}
\def\Qmin{Q^{\rm min}}
\def\rmax{r^{\rm max}}
\def\qmin{q^{\rm min}}
\def\Pmin{P^{\rm min}}
\def\reg{\Phi}
\def\lsk{L^\dagger}
\def\lmax{L^{\rm max}}
\begin{document}

\begin{frontmatter}
\title{Geometric ergodicity of Rao and Teh's algorithm  for Markov jump processes {and CTBNs}}
\runtitle{Geometric ergodicity of Rao and Teh's algorithm}

\begin{aug}
 \author{\fnms{B{\L}a{\.z}ej} \snm{Miasojedow}\thanksref{a,t2}\ead[label=e1]{bmia@mimuw.edu.pl}}
\author{\fnms{Wojciech} \snm{Niemiro}\thanksref{a}\ead[label=e2]{ wniem@mimuw.edu.pl}}
\address[a]{Institute of Applied Mathematics and Mechanics, University of Warsaw\\ 
Banacha 2, 02-097 Warsaw,  Poland\\
\printead{e1},
\printead*{e2}
}

\thankstext{t2}{The work of B{\l}a{\.z}ej Miasojedow is supported by Polish National Science Center grant no.\ 2015/17/D/ST1/01198.}

  \runauthor{B. Miasojedow and W. Niemiro}

\end{aug}

\begin{abstract}
\citet{rao2012mcmc, RaoTeh2013a}  
introduced an efficient MCMC algorithm for sampling from the posterior distribution of a hidden 
Markov jump process. The algorithm is based on the idea of sampling virtual jumps. In the present paper we show that the
Markov chain generated by Rao and Teh's algorithm is geometrically ergodic. To this end we establish a geometric drift condition towards a small set.
A similar result is also  proved for a special version of the algorithm, used for probabilistic inference in Continuous Time Bayesian Networks. 
\end{abstract}

 \begin{keyword}
 \kwd{Continuous time Markov processes} 
 \kwd{MCMC} 
 \kwd{Hidden Markov models}
 \kwd{Posterior sampling} 
 \kwd{Geometric ergodicity}
 \kwd{Drift condition} 
 \kwd{Small set}
 \kwd{Continuous Time Bayesian Network}
\end{keyword}

\end{frontmatter}
\section{Introduction}\label{sec: intro} 

Markov jump processes (MJP) are natural extension of Markov chains to continuous time. 
They are widely applied in modelling of the phenomena of chemical, biological,
economic and other sciences.  An important class of MJP are continuous time Bayesian networks (CTBN) 
introduced by \citet{Sch} under the name of composable Markov chains and then reinvented
by \citet{Nod1} under the current name. Roughly, a CTBN is a multivariate MJP in which the dependence structure 
between coordinates can be described by a graph. Such a graphical representation 
allows for decomposing a large intensity matrix into smaller conditional intensity matrices.   

In many applications it is necessary to consider a situation where the trajectory of a Markov jump process 
is not observed directly, only partial and noisy observations are available. Typically, 
the posterior distribution over trajectories is then analytically intractable.  In the literature there exist several approaches 
to the above mentioned problem: based on sampling 
\citep{BoysWilkKirk2008,EFK,FaSh,GolWilk2011,GolWilk2014,GolHendSher2015,Nod2,RaoTeh2013a,rao2012mcmc}, and also based on 
numerical approximations. To the best of our knowledge the most general efficient  method for a finite state space is that proposed by 
\citet{RaoTeh2013a}, and extended to a more general class of continuous time discrete systems in \citet{rao2012mcmc}. Their algorithm is based on introducing so-called 
virtual jumps and a thinning procedure for Poisson processes.

{
Recently \citet{homo} proved geometric ergodicity of Rao and Teh's algorithm in a special case of the homogeneous MJPs observed at discrete moments and 
when the virtual jumps are introduced by unifromization procedure. In the present paper we generalise results from \citet{homo} to a larger class of MJPs, 
more general observation models and also for more general class of state dependent thinning procedures. We also establish
geometric ergodicity of a Gibbs sampler for CTBN's. Geometric ergodicity is a key property of Markov chains which implies Central Limit Theorem for 
sample averages.

Note that in practice the parameters of the hidden MJP may be unknown and have to be estimated. 
Then for both Bayesian and frequentist statistical inference the Rao and Teh's algorithm can be applied as a part of more complex algorithms.
In the Bayesian approach, the Rao and Teh's algorithm can be used within a Gibbs sampler or Metropolis-Hastings algorithm which updates unknown parameters, 
according to some posterior distribution.
In the frequentist approach, the Rao and Teh's algorithm can be applied to perform E-step of Monte Carlo or stochastic approximation version of EM algorithm. 
Such extended versions of the Rao and Teh's algorithm are not considered in our paper. We assume that the probability law of
a hidden MJP is known. However we strongly believe that geometric ergodicity of the  Rao and Teh's algorithm for a given parameters of hidden process will be 
crucial in the theoretical analysis of such complex methods. }

The rest of the paper is organised as follows. In Section~\ref{sec: MJP} we briefly introduce hidden Markov jump processes, 
next in Section~\ref{sec: depthin} we recall the  dependent thinning procedure and the Rao and Teh's algorithm.
The main result is proved in Section~\ref{sec: main} and extensions for CTBN's are given in Section~\ref{sec: CTBN}.

Throughout this paper we use $p(X)$ as the generic notation for a probability density of
a random object $X$, so it may denote different functions. Set $\{n,n+1,\ldots,m\}$ is denoted by $\rang{n}{m}$ (for integer $n\leq m$). 

\section{Hidden Markov jump processes}\label{sec: MJP}

Consider a continuous time Markov process $\{X(t),{\tmin\leq t\leq \tmax}\}$ on a finite state space $\S$.
Its probability law is defined via the initial distribution $\nu(s)=\mathbb{P}(X(\tmin)=s)$ and the transition intensities 
\begin{equation}\nonumber
  Q(t;s,s^\prime)=\lim_{h\to 0}\frac{1}{h}\Pr(X(t+h)=s\p|X(t)=s)
\end{equation}
for $s,s^\prime\in\S$, $s\not=s\p$. Let $Q(t; s)=\sum_{s^\prime\neq s} Q(t; s,s^\prime)$ denote the intensity of leaving state $s$. 
In general, process $X$ can be time-inhomogeneous, that is we allow the intensities to vary in time. 
For definiteness, assume that $X$ has right continuous trajectories. We say $X$ is a Markov jump process (MJP). 

Suppose that process $X$ cannot be directly observed but we can observe some random
quantity $Y$ with probability distribution $L(Y|X)$. Let us say $Y$ is the evidence
and $L$ is the likelihood. The problem is to restore the hidden trajectory of $X$ given $Y$.  From the
Bayesian perspective, the goal is to compute/approximate/sample from the posterior 
\begin{equation}\nonumber
 p(X|Y)\propto p(X)L(Y|X).
\end{equation}
Function $L$, transition probabilities $Q$ and initial distribution $\nu$ are assumed 
to be known. 
We consider {two typical forms of noisy observation. 
In the first part of our paper we assume} that the trajectory $X([\tmin,\tmax])$ is observed independently at $k$  deterministic time points with
some random errors. Formally, we observe $Y=(Y_1,\ldots,Y_k)$ where
\begin{equation}\label{eq: obs}
 L(Y|X)=\prod_{j=1}^k L_j(Y_j|X(\tobs_j)),
\end{equation}
for some fixed known points {$\tmin\leq \tobs_1<\cdots<\tobs_k\leq \tmax$}. 
Another type of evidence is considered later in Section \ref{sec: CTBN}, in the context of CTBNs. 
{In Remarks \ref{rem: lik} and \ref{rem: obs} we mention some alternative assumptions about 
the form of evidence.} 

The obvious standing assumption in our paper is that 
$L(Y|X)>0$ happens with nonzero probability if $X$ is given by $\nu$ and $Q$. 
{It means that the hidden MJP under consideration is ``possible'', i.e.\ the data do not contradict the probabilistic model.}  

\section{Dependent thinning and Rao and Teh's algorithm}\label{sec: depthin}

The so-called ``dependent thinning'' is a useful representation of a Markov jump process in terms
of potential times of jumps and the corresponding states \citep{rao2012mcmc}. 
{The intensities are assumed to be uniformly bounded, so the process $X$ has a finite number of jumps in the bounded interval $[\tmin,\tmax]$.}
Every  trajectory 
$X([\tmin,\tmax])$ is right continuous and piecewise constant:  $X(t)=S_{i-1}$ for $T_{i-1}\leq t< T_i$, 
where random variables $T_i$ are such that $\tmin< T_1<\cdots<T_N<\tmax<T_{N+1}$. By convention, $T_0=\tmin$.  
Random sequence of states $S=(S_0,S_1,\ldots,S_N)$ such that $S_i=X(T_i)$ is called a skeleton. 
We do not assume that $S_{i-1}\not=S_i$, and therefore the two sequences 
\begin{equation}\nonumber
 \begin{pmatrix} 
  T\\S
 \end{pmatrix}
 =
 \begin{pmatrix} T_0 & T_1 & \cdots &  T_i & \cdots & T_N \\ 
S_0 & S_1 & \cdots & S_i & \cdots & S_N  \end{pmatrix}.
\end{equation}
represent the process $X$ in a redundant way: many pairs $(T,S)$ correspond to the same trajectory $X([\tmin,\tmax])$. 
Let $J=\{i \in \rang{1}{N}\colon S_{i-1}\neq S_{i}\}\cup\{0\}$, so that $T_J=(T_i\colon i \in J)$ are moments of \textit{true} jumps and
$T_{-J}=T\setminus T_J=(T_i\colon i \not\in J)$ are \textit{virtual} jumps. By a harmless abuse of notation, 
we identify increasing sequences of points in $[\tmin,\tmax]$ with finite sets. Note that the trajectory of $X$ is
uniquely defined by $(T_J,S_J)$. Let us write $X\equiv (T_J,S_J)$ and also use the notation $\jump(X)=T_J$ for the set of
true jumps.

The state-dependent thinning procedure taken from \citet{rao2012mcmc} is the following. 
We choose a function $R(t; s)\geq Q(t; s)$, interpreted as intensity of an inhomogeneous Poisson process depending on state $s\in\S$.  
The first point of this Poisson process after time $u$, say $W$, has the  probability density
\begin{equation}\label{eq: gentimes}
 f(w|u,s)=R(w;s)\exp\left[-\int\limits_{u}^{w} R(t;s) \d t\right] \qquad (w>u).
\end{equation}
Let
\begin{equation}\label{eq: genjumps}
 P(t;s,s^\prime)=\begin{cases}
          \dfrac{Q(t;s,s^\prime)}{R(t;s)}&\text{ if }s\neq s^\prime;\\
          &\\  
          1-\dfrac{Q(t;s)}{R(t;s)} & \text{ if }s=s^\prime.
         \end{cases}
\end{equation}
Sampling of $(T,S)$ then proceeds as described in Algorithm 1. 

\begin{algorithm}\label{alg: depthin}
\caption{Sampling by state-dependent thinning.}
\begin{algorithmic}
 \STATE Let ${T}_0=\tmin$ and $i=0$.
 \STATE Sample ${S}_0\sim\nu(\cdot)$.
 \WHILE{${T}_i<\tmax$}
 \STATE Let $i=i+1$.
 \STATE Sample $T_i\sim f(\cdot|T_{i-1},S_{i-1})$ \COMMENT{ given by \eqref{eq: gentimes}}.
 \STATE Sample ${S}_i\sim P({T}_i; S_{i-1},\cdot\;)$ \COMMENT{ given by \eqref{eq: genjumps}}.
 \ENDWHILE
\end{algorithmic}
\end{algorithm}

By \eqref{eq: gentimes} and \eqref{eq: genjumps}, the joint probability distribution of $(T,S)$ is the following.
\begin{equation}\label{eq: genprob}
\begin{split}
 p(T,S)&=\nu(S_0)\prod_{i=1}^N P(T_i;S_{i-1},S_i)R(T_i;S_{i-1})\exp\left[-\int\limits_{T_{i-1}}^{T_i} R(t;S_{i-1}) \d t\right] \\
       &\qquad\qquad\times  \exp\left[-\int\limits_{T_{N}}^{\tmax} R(t;S_{N}) \d t\right].
\end{split}
\end{equation}
The last part of the above expression is equal to $\Pr(T_{N+1}>\tmax|T_N,S_N)$. 
The pair $(T,S)$ produced by Algorithm 1 is a redundant representation of MJP $X$ defined by $\nu$ and $Q$ 
(probability distribution of $X$ obtains if we ``integrate out'' virtual jumps).   

\citet{rao2012mcmc} exploit dependent thinning to construct a special version of a Gibbs sampler which converges
to the posterior $p(X|Y)$. The key facts behind their algorithm are the following. First, given the trajectory $X\equiv
(T_J,S_J)$ the conditional distribution of virtual jumps $T_{-J}$ is that of the inhomogeneous
Poisson process with intensity $R(t;X(t))-Q(t;X(t))\geq 0$. Second, this distribution does not change if we introduce the likelihood.
Indeed, $L(Y|X)=L(Y|T_J,S_J)$, so $Y$ and $T_{-J}$ are conditionally independent and thus $p(T_{-J}|T_J,S_J,Y)=p(T_{-J}|T_J,S_J)$. Third,
the conditional distribution $p(S|T,Y)$ is that of a hidden discrete time Markov chain and can be efficiently sampled from using
the algorithm FFBS (Forward Filtering-Backward Sampling, \citet{CarterKohn,Fru-Sch}). 
Indeed, from \eqref{eq: genprob} and \eqref{eq: obs} it follows that
\begin{equation}\label{eq: genHMM}
 p(S|T,Y)\propto p(T,S)L(Y|T,S)\propto \nu(S_0)g_0(S_0)\prod_{i=1}^N P_i(S_{i-1},S_i) g_i(S_i),
\end{equation}
where $P_i(s,s\p)=P(T_i,s,s\p)$ is the stochastic matrix defined by \eqref{eq: genjumps} and
\begin{equation}\label{eq: gis}
\begin{split}
 g_{i-1}(s)&=R(T_i,s)  \exp\left[-\int\limits_{T_{i-1}}^{T_i} R(t;s) \d t\right]\times \prod_{j:T_{i-1}\leq \tobs_j<T_i}L_j(Y_j|s),
\qquad (i=1,\ldots,N),\\
  g_{N}(s)&=  \exp\left[-\int\limits_{T_{N}}^{\tmax} R(t;s) \d t\right]\times \prod_{j:T_{N}\leq \tobs_j<\tmax}L_j(Y_j|s). 
\end{split}
\end{equation}
Note that functions $g_i$ include not only the likelihood but also a part due to the prior distribution $p(T,S)$.    

The Rao and Teh's algorithm generates a Markov chain $X_0,X_1,\ldots,X_m,\ldots$ (where $X_m=X_m([\tmin,\tmax])$ is a trajectory of a MJP), convergent
to $p(X|Y)$. A single step, that is the rule of transition from $X_{m-1}=X$ to $X_{m}=X\p$ is described in Algorithm 2. 
\begin{algorithm}\label{alg: RT}
 \caption{Single step of Rao and Teh's algorithm.}
 \begin{algorithmic}
  \STATE {\bf input:} previous state $(T_J,S_J)\equiv X$ and observation $Y$.
  \begin{itemize}
 \item[(V)]\label{RT: V} Sample a Poisson process $V$ with intensity $R(t;X(t))-Q(t;X(t))$ on $[\tmin,\tmax]$. Let $T\p=T_J\cup V$ 
 \COMMENT{\textit{new set of potential times of jumps}}.
 \item[(S)]\label{RT: S} Draw new skeleton $S\p$ from the conditional distribution $p(\cdot|T\p,Y)$ by FFBS.  
  The new allocation of virtual and true jumps is via $J\p=\{i: S_{i-1}\p\neq S_{i}\p\}\cup\{0\}$ 
  \COMMENT{\textit{we discard new virtual jumps $T\p_{-J\p}$}}.
 \end{itemize}
  \RETURN new state  $(T\p_{J\p},S\p_{J\p})\equiv X\p$.
  \end{algorithmic}
\end{algorithm}

Convergence of the algorithm has been shown by its authors in \citet{rao2012mcmc}. It follows from the fact that the chain has the stationary distribution
$p(X|Y)$ and is irreducible and periodic, provided that $R(t;s)>Q(t;s)$.

\section{Main result}\label{sec: main}

Let $A$ be the transition kernel of the Markov chain $X_m$ generated by the Rao and Teh's algorithm. 
Let $\Pi$ be the target distribution. (It is the posterior distribution of $X$ given $Y$.  
In this paper we consider only Monte Carlo randomness, so $Y$ is fixed 
and can be omitted in notation.)  

\begin{theorem}\label{th: main} Consider a hidden MJP in which the evidence is of the form \eqref{eq: obs}.
Assume that 
\begin{enumerate}
 \item\label{as: qmin} there exists an irreducible matrix $\Qmin$ such that $\Qmin(s,s\p)\leq Q(t;s,s\p)$ for all $s,s\p\in\S$, $s\not=s\p$, $t\in[\tmin,\tmax]$,
 \item\label{as: eta} there exists $\eta>0$ such that $Q(t;s)/R(t;s)\leq 1-\eta$ for all $s\in\S$, $t\in[\tmin,\tmax]$,
 \item\label{as: rmax} there exists $\rmax<\infty$ such that $R(t;s)\leq\rmax$ for all $s\in\S$, $t\in[\tmin,\tmax]$.
 \end{enumerate}
Then the chain $X_m$ produced by the Rao and Teh's Algorithm 2 is geometrically ergodic, i.e.\ there exist
constant $\gamma<1$ and function $M$ such that for every initial trajectory $X$,
\begin{equation}\nonumber
 \Vert A^m(X,\cdot)-\Pi(\cdot)\Vert_{\rm tv} \leq \gamma^m M(X).
\end{equation}
 \end{theorem}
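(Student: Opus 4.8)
The plan is to verify the classical geometric drift-and-minorization criterion (see e.g. Meyn and Tweedie): it suffices to exhibit a function $V\ge 1$, a set $C$, constants $\lambda<1$, $b<\infty$, $\epsilon>0$ and a probability measure $\varphi$ such that the transition kernel satisfies the geometric drift $AV\le\lambda V+b\,\Ind_C$ together with the minorization $A(X,\cdot)\ge\epsilon\,\varphi(\cdot)$ for every $X\in C$. Then $C$ is a small set, the drift propels the chain back into $C$ at a geometric rate, and geometric ergodicity with $M(X)$ proportional to $V(X)$ follows. I would take as Lyapunov function $V(X)=\kappa^{|\jump(X)|}$, where $|\jump(X)|$ is the number of true jumps of the trajectory $X$ and $\kappa>1$ is a constant to be fixed, and use the set $C=\{X:|\jump(X)|\le n_0\}$ of trajectories with few jumps, the threshold $n_0$ being chosen at the end.

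For the drift I would split a single step into its two substeps. In step (V) one superimposes a Poisson process $V$ of intensity $R(t;X(t))-Q(t;X(t))\le\rmax$, so $|V|$ is dominated by a Poisson variable of mean $\rmax(\tmax-\tmin)$ and in particular has all exponential moments; the new grid $T\p$ has $|\jump(X)|+|V|$ points. In step (S) the new skeleton is drawn from the FFBS target \eqref{eq: genHMM}, and the new jump count $n\p$ is the number of indices with $S\p_{i-1}\neq S\p_i$. The engine of the drift is a reweighting identity: multiplying the FFBS weight of a skeleton by $\kappa^{n\p}$ amounts to replacing each transition matrix $P_i$ by $\tilde P_i(s,s\p)=P_i(s,s\p)\kappa^{\Ind[s\neq s\p]}$, and by Assumption \ref{as: eta} its row sums obey $\sum_{s\p}\tilde P_i(s,s\p)=1+(\kappa-1)Q(T_i;s)/R(T_i;s)\le\eta+(1-\eta)\kappa=:c$. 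Since $c<\kappa$ precisely because $\kappa>1$, I would deduce a bound of the form $\Ex[\kappa^{n\p}\mid T\p]\le B\,c^{|T\p|}$, with $B$ absorbing the state-dependent emission factors $g_i$. Averaging over $V$ gives $AV(X)\le B\,\e^{\rmax(\tmax-\tmin)(c-1)}(c/\kappa)^{|\jump(X)|}\,V(X)$, and since $c/\kappa<1$ the prefactor drops below any prescribed $\lambda<1$ once $|\jump(X)|$ exceeds a threshold $n_0$, which yields the drift with the remaining bounded contribution collected into $b\,\Ind_C$.

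For the minorization I would show that the one-step kernel, restricted to $X\in C$, dominates a fixed measure on trajectory space. The trajectories form a hybrid space (a varying number of jumps whose times range over a continuum), so the small-set property has to be read as a uniform lower bound on the density of $A(X,\cdot)$ with respect to a reference measure. Assumption \ref{as: eta} guarantees that the virtual intensity $R-Q$ is strictly positive, so the Poisson step (V) has a density bounded below on bounded configurations, while Assumption \ref{as: qmin} (irreducibility of $\Qmin$) ensures that the FFBS step (S) assigns a probability bounded below to every admissible skeleton. Combining these, and using that for $X\in C$ the at most $n_0$ incoming jump times of $X$ range over the compact set $[\tmin,\tmax]$, a compactness-and-continuity argument should produce a single $\epsilon>0$ and measure $\varphi$ valid for all $X\in C$.

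The main obstacle is the bound $\Ex[\kappa^{n\p}\mid T\p]\le B\,c^{|T\p|}$ with $B$ independent of the number of grid points. In the homogeneous, uniformization setting of \citet{homo} the rate $R$ is constant, so the Poisson emission factors in $g_i$ cancel cleanly between the reweighted and the original normalizing constants and the row-sum bound transfers at once. Here $R(t;s)$ depends on the state, so the factors $R(T_i;s)\exp[-\int R(t;s)\,\d t]$ do not factor out of the ratio of normalizing constants, and the likelihood introduces further state-dependent weights. Controlling these is the crux: I would pair each Poisson emission with the adjacent transition matrix so that the two telescope in the ratio, and exploit that genuine distortions are confined to the boundary terms and to the $k$ fixed observation intervals, so that only finitely many bounded factors survive into $B$. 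Making this telescoping rigorous for state-dependent $R$ and general likelihoods $L_j$ is where the argument departs from, and extends, the homogeneous case.
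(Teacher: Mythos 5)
There is a genuine gap, and you have in fact located it yourself: the whole argument hinges on the bound $\Ex[\kappa^{n\p}\mid T\p]\le B\,c^{|T\p|}$ with $B$ independent of $|T\p|$, and this is asserted, not proved. Writing $\Ex[\kappa^{n\p}\mid T\p]$ as a ratio of two normalizing constants, the numerator is controlled by the row sums of $\tilde P_i$ only after the state-dependent weights $g_i$ are stripped off; bounding the $g_i$ above in the numerator and below in the denominator leaves a factor $\prod_i(\gu_i/\gl_i)\approx(\gu/\gl)^{|T\p|}$, which overwhelms $c^{|T\p|}$ since $\gu/\gl$ is bounded away from $1$ (here $\gu=\rmax$, $\gl=\qmin\exp[-\rmax(\tmax-\tmin)]$). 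Your telescoping does dispose of the exponential parts of the $g_i$ (their product over $i$ is $\exp[-\int R(t;X(t))\,\d t]$, of uniformly bounded ratio) and the prefactors $R(T_i;s)$ cancel against the denominators of the $P_i$; but what remains is a comparison of products of non-commuting positive matrices $M_i^{(\kappa)}$ and $M_i$ for which the row-sum inequality $M_i^{(\kappa)}\mathbf{1}\le c\,M_i\mathbf{1}$ does not propagate: one needs $M_i^{(\kappa)}v\le c\,M_i v$ for the vectors $v$ arising downstream in the product, which fails unless the oscillation of $v$ is controlled (a Birkhoff-contraction-type argument that you have not supplied). So the drift step, as written, does not close.

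The paper avoids this difficulty entirely by taking the \emph{linear} Lyapunov function $V(X)=|\jump(X)|$ rather than $\kappa^{|\jump(X)|}$. For a linear drift one only needs $\Ex(|J\p|\mid T\p)\le(1-\delta)|T\p|+\mathrm{const}$, which follows from a \emph{local} estimate: the backward transition probability of the FFBS chain satisfies $\Pr(S_i=s\mid S_{i+1}=s)\ge\delta_i$ where $\delta_i$ involves the ratio $\gl_l/\gu_l$ only over a window of fixed length $n_0$ (Lemmas \ref{lem: forback} and \ref{lem: thin}), so no factor exponential in $|T\p|$ ever appears; the $k$ observation indices are simply excluded from the count. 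The price is that the sublevel sets $\{|\jump(X)|\le h\}$ must then be shown small, and here your minorization sketch is also too thin: $A(X,\cdot)$ is not absolutely continuous with respect to a fixed reference measure because the input jump times $\jump(X)$ are always retained in $T\p$, so a ``density bounded below'' argument cannot start. One must isolate the explicit event that every old jump is reclassified as virtual (probability at least $\eta^{h}$, which is why the bound on $|\jump(X)|$ is needed) while a freshly planted, $X$-independent configuration of times and a fixed skeleton through high-likelihood states at the $\tobs_j$ survives; this is the content of the paper's construction with $V^0$, $V^{\rm rest}$ and the rejection-sampling reformulation of FFBS in Proposition \ref{pr: small}. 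Either repair your exponential drift with a genuine matrix-product comparison, or switch to the linear Lyapunov function and supply the explicit regeneration construction.
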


We begin with some auxiliary results. In Lemmas \ref{lem: forback} and \ref{lem: thin} we  
consider an inhomogeneous Markov chain $S_0,S_1,\ldots,S_n$ on a finite state space $\S$ with the joint probability distribution given by 
\begin{equation}\nonumber
  \Pr(S_{1}=s_1,\ldots,S_{n}=s_n)\propto \nu(s_0)g_0(s_0)\prod_{i=1}^n P_i(s_{i-1},s_i) g_i(s_i),
 \end{equation}
where $P_i$ are stochastic matrices and $g_i$ are non-negative functions (formula \eqref{eq: genHMM} shows that the conditional distribution of skeleton 
is of this form). Assuming that {$0< n_0\leq i<n$}, we define
 \begin{equation}\nonumber
 {P}_{i-n_0:i}(s_{i-n_0},s)=\sum_{\{s_{i-n_0+1},\dots,s_{i-1}\}}\left(\prod_{l=i-n_0+1}^{i-1}P_l(s_{l-1},s_{l})\right) P_{i}(s_{i-1},s)\;.
 \end{equation}
and
 \begin{equation}\nonumber
  {P}^g_{i-n_0:i}(s_{i-n_0},s)=\sum_{\{s_{i-n_0+1},\dots,s_{i-1}\}}\left(\prod_{l=i-n_0+1}^{i-1}P_l(s_{l-1},s_{l})g_l(s_{l})\right) P_{i}(s_{i-1},s)\;.
 \end{equation}
\begin{lemma}\label{lem: forback} Assume that 
\begin{enumerate}
 \item\label{as: xi} for some $\xi>0$ inequality $P_{i-n_0:i}(s_{i-n_0},s)\geq \xi$ holds for  all $s_{i-n_0},s\in\S$,
\item\label{as: etalem} for some $\eta>0$ inequality $P_{i+1}(s,s)\geq \eta$ holds for all $s\in\S$, 
\item\label{as: g}  {for some $\gl_l$ and $\gu_l$ we have $\gl_l\leq g_l(s)\leq \gu_l$ for all $s\in\S$, $l\in\rang{i-n_0+1}{i}$.}
\end{enumerate}
Then 
 \begin{equation}\nonumber
  \Pr(S_i=s|S_{i+1}=s)\geq\delta_i,\quad\text{where}\quad \delta_i=\frac{\xi\eta}{|\S|}{\prod_{l=i-n_0+1}^{i}\frac{\gl_l}{\gu_l}.}
 \end{equation}
\end{lemma}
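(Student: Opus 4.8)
The plan is to use the hidden-Markov-model structure of the joint law of $S_0,\dots,S_n$ and to reduce the conditional probability $\Pr(S_i=s\mid S_{i+1}=s)$ to a ratio of \emph{forward} quantities, with the backward contribution cancelling. I would introduce the unnormalised forward message
\begin{equation}\nonumber
 F(s')=g_i(s')\sum_{s_0,\dots,s_{i-1}}\nu(s_0)g_0(s_0)\Big(\prod_{l=1}^{i}P_l(s_{l-1},s_l)\Big)\Big(\prod_{l=1}^{i-1}g_l(s_l)\Big)\Big|_{s_i=s'},
\end{equation}
together with a backward message $B(s)$ collecting every factor carrying an index larger than $i$ (and the emission $g_{i+1}(s)$). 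Summing the joint law over all coordinates except $S_i$ and $S_{i+1}$ factorises as $\Pr(S_i=s',S_{i+1}=s)\propto F(s')\,P_{i+1}(s',s)\,B(s)$, and since $B(s)$ does not depend on $s'$ it cancels upon conditioning on $S_{i+1}=s$, leaving
\begin{equation}\nonumber
 \Pr(S_i=s\mid S_{i+1}=s)=\frac{F(s)\,P_{i+1}(s,s)}{\sum_{s''\in\S}F(s'')\,P_{i+1}(s'',s)}.
\end{equation}

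Next I would bound the numerator from below by $\eta\,F(s)$ using hypothesis~\ref{as: etalem} ($P_{i+1}(s,s)\ge\eta$) and the denominator from above by $\sum_{s''}F(s'')$ using $P_{i+1}(s'',s)\le1$, so that everything reduces to a lower bound on $F(s)\big/\sum_{s''}F(s'')$. To bring in the matrices $P_{i-n_0:i}$ and $P^g_{i-n_0:i}$ I would peel off the last $n_0$ steps of $F$: writing $H(u)$ for the forward message up to index $i-n_0$, one obtains the clean factorisation
\begin{equation}\nonumber
 F(s')=g_i(s')\sum_{u\in\S}H(u)\,P^g_{i-n_0:i}(u,s').
\end{equation}

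The key estimate is to sandwich $P^g_{i-n_0:i}$ by the $g$-free matrix $P_{i-n_0:i}$ via hypothesis~\ref{as: g}; since each intermediate emission obeys $\gl_l\le g_l(s_l)\le\gu_l$,
\begin{equation}\nonumber
 \Big(\prod_{l=i-n_0+1}^{i-1}\gl_l\Big)P_{i-n_0:i}(u,s')\le P^g_{i-n_0:i}(u,s')\le\Big(\prod_{l=i-n_0+1}^{i-1}\gu_l\Big)P_{i-n_0:i}(u,s').
\end{equation}
Combining the lower sandwich with $g_i(s)\ge\gl_i$ and the mixing bound $P_{i-n_0:i}(u,s)\ge\xi$ of hypothesis~\ref{as: xi} gives $F(s)\ge\xi\big(\prod_{l=i-n_0+1}^{i}\gl_l\big)\sum_u H(u)$; combining the upper sandwich with $g_i(s'')\le\gu_i$ and the fact that each row of the stochastic matrix $P_{i-n_0:i}$ sums to one gives $\sum_{s''}F(s'')\le\big(\prod_{l=i-n_0+1}^{i}\gu_l\big)\sum_u H(u)$. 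The common factor $\sum_u H(u)$ cancels, and reinstating $\eta$ yields $\Pr(S_i=s\mid S_{i+1}=s)\ge\xi\eta\prod_{l=i-n_0+1}^{i}(\gl_l/\gu_l)$, which is in particular at least $\delta_i$.

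The work here is bookkeeping rather than a deep idea, and that is exactly where I expect the only real difficulty: one must track carefully which emission $g_l$ is attached to which summation variable, noting that $P^g_{i-n_0:i}$ omits $g_i$ while $F$ carries it, so that the telescoped ratio $\prod\gl_l/\gu_l$ comes out with precisely the range $l\in\rang{i-n_0+1}{i}$ claimed in $\delta_i$. I also observe that the factor $1/|\S|$ in the stated $\delta_i$ is not in fact needed: it disappears once one uses that $P_{i-n_0:i}$ is stochastic, and would be forced only by the cruder denominator bound $\sum_{s''}P_{i-n_0:i}(u,s'')\le|\S|$.
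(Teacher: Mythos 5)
Your proof is correct and follows essentially the same route as the paper's: both arguments sandwich $P^g_{i-n_0:i}$ between $\bigl(\prod_{l}\gl_l\bigr)P_{i-n_0:i}$ and $\bigl(\prod_{l}\gu_l\bigr)P_{i-n_0:i}$ and then invoke hypotheses 1 and 2, the only structural difference being that the paper conditions additionally on $S_{i-n_0}$ and uses the two-sided Markov property (then takes expectation over $S_{i-n_0}$), whereas you marginalize it out through a forward message. Your side observation is also right --- the paper bounds the denominator sum $\sum_{s'}P_{i-n_0:i}(s_{i-n_0},s')P_{i+1}(s',s)$ crudely by $|\S|$ where stochasticity would give $1$ --- but since $\delta_i$ is only a lower bound this costs nothing, consistent with the paper's remark that constants are not optimized.
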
 
\begin{proof} We condition additionally on $S_{i-n_0}=s_{i-n_0}$ and use two-sided Markov property to obtain
\begin{align*}
 \Pr(S_i=s|S_{i-n_0}=s_{i-n_0},S_{i+1}=s)&=\frac{{P}^g_{i-n_0:i}(s_{i-n_0},s)g_i(s)P_{i+1}(s,s)}{\sum_{s\p}{P}^g_{i-n_0:i}(s_{i-n_0},s\p)g_i(s\p)P_{i+1}(s\p,s)}
\\&\geq\frac{\left(\prod_{l=i-n_0+1}^{i}{\gl_l}\right){P}_{i-n_0:i}(s_{i-n_0},s)P_{i+1}(s,s)}{\left(\prod_{l=i-n_0+1}^{i}{\gu_l}\right)\sum_{s\p}{P}_{i-n_0:i}(s_{i-n_0},s\p)P_{i+1}(s\p,s)}\\
&\geq\frac{\xi\eta}{|\S|}{\prod_{l=i-n_0+1}^{i}\frac{\gl_l}{\gu_l}\;.}
\end{align*} 
Finally let us remark that conclusion of the the lemma remains trivially true if $\gl_l=0$ for some $l\in\rang{i-n_0+1}{i}$ and thus $\delta_i=0$.  
\end{proof}

\begin{remark}
In Lemma  \ref{lem: forback} we bound from below the backward transition probability used by the FFBS algorithm. 
However, the identical inequality is true also for the forward  transition probability $\Pr(S_i=s|S_{i-1}=s)$. 
\end{remark}

\begin{remark}
In the time-homogeneous case, when $P_i=P$, the first assumption of Lemma \ref{lem: forback} is essentially equivalent to 
irreducibility and aperiodicity of matrix $P$.  Note also
that the two constants $\xi$ and $\eta$ play different roles in Rao and Teh's algorithm. 
\end{remark}

\begin{lemma}\label{lem: thin}
Let the assumptions of Lemma \ref{lem: forback} hold for all $i\in\rang{n_0}{n-1}$.  Then
\begin{equation}\nonumber
 {\Ex |J|\leq n+1-\sum_{i=n_0}^{n-1}\delta_i,} 
\end{equation}   
{where  $J=\{i\in \rang{1}{n}\colon S_{i-1}\not=S_{i}\}\cup\{0\}$}.
\end{lemma}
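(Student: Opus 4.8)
The plan is to express $|J|$ as a sum of indicators and then convert the statement into a lower bound on the probability that the chain stays put at each step. Since $J=\{i\in\rang{1}{n}\colon S_{i-1}\neq S_i\}\cup\{0\}$ and $0\notin\rang{1}{n}$, I would first write
\begin{equation}\nonumber
 |J|=1+\sum_{i=1}^{n}\Ind(S_{i-1}\neq S_i),
\end{equation}
and take expectations to obtain
\begin{equation}\nonumber
 \Ex|J|=1+\sum_{i=1}^{n}\Pr(S_{i-1}\neq S_i)=n+1-\sum_{i=1}^{n}\Pr(S_{i-1}=S_i).
\end{equation}
Thus the claimed bound is equivalent to the inequality $\sum_{i=1}^{n}\Pr(S_{i-1}=S_i)\geq\sum_{i=n_0}^{n-1}\delta_i$, and the whole task reduces to bounding each single-step ``no-jump'' probability from below by the corresponding $\delta_i$.

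For the key step I would exploit Lemma \ref{lem: forback}, which gives $\Pr(S_i=s\mid S_{i+1}=s)\geq\delta_i$ for every $s\in\S$ and every $i\in\rang{n_0}{n-1}$. Averaging this conditional bound against the marginal law of $S_{i+1}$ yields
\begin{equation}\nonumber
 \Pr(S_i=S_{i+1})=\sum_{s\in\S}\Pr(S_i=s\mid S_{i+1}=s)\,\Pr(S_{i+1}=s)\geq\delta_i\sum_{s\in\S}\Pr(S_{i+1}=s)=\delta_i,
\end{equation}
valid for $i\in\rang{n_0}{n-1}$. After the reindexing $\Pr(S_{i-1}=S_i)=\Pr(S_{j}=S_{j+1})$ with $j=i-1$ running over $\rang{0}{n-1}$, I would discard the nonnegative terms outside the range $\rang{n_0}{n-1}$ to obtain $\sum_{i=1}^{n}\Pr(S_{i-1}=S_i)=\sum_{j=0}^{n-1}\Pr(S_j=S_{j+1})\geq\sum_{j=n_0}^{n-1}\delta_j$, which is exactly the inequality needed. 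Combining with the expression for $\Ex|J|$ gives the conclusion.

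Given that Lemma \ref{lem: forback} already performs the analytic heavy lifting, this argument is essentially bookkeeping, and I do not expect any serious obstacle. The only points requiring care are the index shift between the ``no-jump'' events $\{S_{i-1}=S_i\}$ and the backward bound of Lemma \ref{lem: forback} (stated for $\{S_i=S_{i+1}\}$), and the observation that the terms with $j<n_0$ may fall outside the hypotheses of Lemma \ref{lem: forback} but can simply be dropped since probabilities are nonnegative. It is also worth noting that the same bound would follow from the forward transition probability, since the remark after Lemma \ref{lem: forback} asserts the identical inequality for $\Pr(S_i=s\mid S_{i-1}=s)$.
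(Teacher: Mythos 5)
Your proposal is correct and follows essentially the same route as the paper: write $|J|$ as $1$ plus a sum of jump indicators, apply Lemma \ref{lem: forback} to bound $\Pr(S_i=S_{i+1})\geq\delta_i$ for $i\in\rang{n_0}{n-1}$ by integrating the backward conditional bound against the law of $S_{i+1}$, and use the trivial bound for the remaining indices. The only difference is cosmetic bookkeeping in the indexing.
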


\begin{proof} Note that {$|J|=1+\sum_{i=0}^{n-1} \Ind( S_i\not=S_{i+1})$}.
We apply Lemma \ref{lem: forback} to each $i\in\rang{n_0}{n-1}$ to obtain $\Pr( S_i=S_{i+1}|S_{i+1})\geq \delta_i$ and
consequently $\Ex \Ind( S_i\not=S_{i+1})=\Ex \Pr( S_i\not=S_{i+1}|S_{i+1})\leq 1-\delta_i$. 
For $i<n_0$ we apply the trivial bound $\Ex \Ind( S_i\not=S_{i+1})\leq 1$. 
\end{proof}

In the next proposition we establish a geometric drift condition for the Markov chain  $X_0,X_1,\ldots,X_m,\ldots$.
Consider a single step, that is transition from $X_{m-1}=X$ to $X_{m}=X\p$. 
The dependence on the input trajectory $X$ (and also on $Y$) is implicitly assumed but indicated only when necessary.   
Recall that $|\jump(X)|$ is the number of true jumps of the trajectory $X([\tmin,\tmax])$. 

\begin{proposition}[Drift Condition]\label{pr: drift} Under the assumptions of Theorem \ref{th: main},
there exist $\delta>0$ and $c<\infty$ such that in a single step of the Rao and Teh's algorithm,
$\Ex(|\jump(X\p)||X)\leq (1-\delta)|\jump(X)|+c$.   
\end{proposition}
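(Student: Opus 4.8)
The plan is to bound the expected number of true jumps of the output trajectory $X\p$ by analysing the FFBS step (S) applied to the time grid $T\p = T_J \cup V$. The key observation is that each step of Rao and Teh's algorithm produces a skeleton $S\p$ on the refined grid $T\p$ whose distribution is exactly of the hidden-Markov form \eqref{eq: genHMM}, with stochastic matrices $P_i$ given by \eqref{eq: genjumps} and weight functions $g_i$ given by \eqref{eq: gis}. Writing $n = |T\p|-1$ for the number of intervals, we have $|\jump(X\p)| = |J\p| = 1 + \sum_{i=0}^{n-1}\Ind(S_i\p \neq S_{i+1}\p)$, so Lemma \ref{lem: thin} is tailor-made to produce a drift inequality, provided its hypotheses can be verified for the $P_i$ and $g_i$ at hand. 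The first move, then, is to check Assumptions \ref{as: xi}--\ref{as: g} of Lemma \ref{lem: forback} using the three standing assumptions of Theorem \ref{th: main}.

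The self-transition bound (Assumption \ref{as: etalem}) is immediate: by \eqref{eq: genjumps}, $P_{i+1}(s,s) = 1 - Q(T_{i+1}\p;s)/R(T_{i+1}\p;s) \geq \eta$ by hypothesis \ref{as: eta}. For the $n_0$-step mixing bound (Assumption \ref{as: xi}), I would use hypotheses \ref{as: qmin} and \ref{as: rmax}: since $P_i(s,s\p) = Q(T_i\p;s,s\p)/R(T_i\p;s) \geq \Qmin(s,s\p)/\rmax$ off the diagonal, and $\Qmin$ is irreducible on the finite set $\S$, some fixed number $n_0 = |\S|-1$ of steps suffices to reach any state with a probability bounded below by a constant $\xi>0$ depending only on $\Qmin$, $\rmax$ and $\eta$ (one keeps the $P_i(s,s)\geq\eta$ holds to fill in the remaining steps along an irreducibility path). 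For the weight bounds (Assumption \ref{as: g}), I would read off from \eqref{eq: gis} that each $g_l(s)$ is a product of an exponential factor $\exp[-\int R]$, which lies in $[\exp(-\rmax(\tmax-\tmin)),1]$ by \ref{as: rmax}, a factor $R(T_{l+1}\p;s)\in(0,\rmax]$, and likelihood factors $L_j(Y_j|s)$; the latter are bounded above and below by positive constants because $\S$ and the index set of observations are finite (the positivity uses the standing assumption that $L(Y|X)>0$ with positive probability). Hence $\gl_l$ and $\gu_l$ exist, and crucially the ratio $\gl_l/\gu_l$ is bounded below by a single constant $\rho>0$ independent of the grid, so that $\delta_i \geq \delta := (\xi\eta/|\S|)\rho^{n_0} > 0$ uniformly in $i$.

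With a uniform $\delta_i \geq \delta$, Lemma \ref{lem: thin} gives $\Ex(|J\p|\,|\,X) \leq (n+1) - (n-n_0)\delta = (1-\delta)(n+1) + n_0\delta + \delta$, where $n+1 = |T\p|$. It remains to relate $n+1 = |T\p|$ back to $|\jump(X)|$. Here $T\p = T_J \cup V$, so $|T\p| = |\jump(X)| + |V|$, where $V$ is the Poisson process of virtual jumps with intensity $R(t;X(t))-Q(t;X(t)) \leq \rmax$ on $[\tmin,\tmax]$; thus $\Ex(|V|\,|\,X) \leq \rmax(\tmax-\tmin) =: \Lambda$, a constant. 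Taking expectations over both the virtual-jump step (V) and the skeleton step (S) and combining,
\begin{equation}\nonumber
 \Ex(|\jump(X\p)|\,|\,X) \leq (1-\delta)\bigl(|\jump(X)| + \Lambda\bigr) + n_0\delta + \delta,
\end{equation}
which is exactly the asserted form with $c = (1-\delta)\Lambda + (n_0+1)\delta < \infty$. The main obstacle I anticipate is the uniformity of the constants in the grid: because the FFBS step runs on a random, data-dependent number of points $n$ with irregular spacings $T_{i-1}\p, T_i\p$, one must ensure that $\xi$ and $\rho$ do not degenerate as intervals shrink or as $n$ grows. This is handled precisely because the bounds in \ref{as: qmin}--\ref{as: rmax} are uniform in $t$, and the exponential factors in \eqref{eq: gis}, though they depend on interval lengths, are controlled globally by $\rmax(\tmax-\tmin)$; care is needed to set up the $n_0$-step argument so that $\xi$ depends only on the irreducibility structure of $\Qmin$ and not on where in the grid the window $[i-n_0,i]$ sits.
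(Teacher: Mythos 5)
Your overall strategy coincides with the paper's: bound $\Ex|V|$ by $\rmax(\tmax-\tmin)$ in stage (V), then verify the hypotheses of Lemma \ref{lem: forback} so that Lemma \ref{lem: thin} thins $T\p$ at a geometric rate in stage (S). The verification of Conditions \ref{as: xi} and \ref{as: etalem} is essentially the paper's argument. However, there is a genuine gap in your treatment of Condition \ref{as: g}: you assert that the likelihood factors $L_j(Y_j|s)$ are ``bounded above and below by positive constants,'' deducing the positivity from the standing assumption that $L(Y|X)>0$ with positive probability. That assumption only guarantees that for each $j$ there exists \emph{some} state $s$ with $L_j(Y_j|s)>0$; it does not rule out $L_j(Y_j|s)=0$ for other states $s$ (e.g.\ observations that exclude certain states outright). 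When this happens, the $g_l$ containing the factor $L_j(Y_j|\cdot)$ has $\gl_l=0$, so your uniform bound $\delta_i\geq\delta>0$ fails for every window $\rang{i-n_0+1}{i}$ containing that index $l$. The paper explicitly flags this point in Remark \ref{rem: lik}: assumption \eqref{eq: obs} does \emph{not} imply $L(Y|X)\geq\varepsilon>0$, and the authors deliberately avoid that stronger hypothesis.

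The repair is exactly what the paper does, and it does not change the form of your final inequality: since there are only $k$ observation times $\tobs_j$, at most $k$ of the functions $g_i$ in \eqref{eq: gis} carry a likelihood factor, hence at most $(k+1)n_0$ indices $i$ have a window touching one of them (or lying before $n_0$); for those you set $\delta_i=0$ and absorb the loss into the additive constant, obtaining $\Ex(|J\p|\,|\,T\p)\leq(1-\delta)|T\p|+(k+1)n_0\delta+\delta$. A second, minor imprecision: for the remaining indices you need a uniform \emph{lower} bound on $R(t;s)$ to get $\gl>0$, not just pointwise positivity; this follows from Assumption \ref{as: qmin} of Theorem \ref{th: main} via $R(t;s)\geq Q(t;s)\geq\Qmin(s)\geq\qmin>0$, giving $\gl=\qmin\exp[-\rmax(\tmax-\tmin)]$ and $\gu=\rmax$. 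With these two corrections your argument matches the paper's proof.
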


\begin{proof} Let us analyse what happens in both two stages {(V)} and {(S)} of Algorithm 2. The initial $X$ is fixed. In 
stage {(V)} we add a new set $V$ of potential jumps. Since
$|V|$ has the Poisson distribution with intensity $\int_{\tmin}^{\tmax} (R(t;X(t))-Q(t;X(t)))\d t$, we have 
$\Ex|V|\leq \mu:=\rmax(\tmax-\tmin)$. Thus we obtain $T\p$ with 
$\Ex(|T\p||X)\leq |\jump(X)|+\mu$. In stage {(S)} the set $T\p$ is ``thinned'' to $T\p_{J\p}$. 
Equation \eqref{eq: genHMM} shows that
conditionally, for fixed $T\p$, sampling of the new skeleton $S\p$ fulfils the conditions of Lemma \ref{lem: thin}, with $n+1=|T\p|$. Indeed,
in view of \eqref{eq: genjumps}, Assumption \ref{as: qmin} of Theorem \ref{th: main} entails Condition \ref{as: xi} of Lemma \ref{lem: forback},
{ at least for sufficiently large $n_0$ and all $i\geq n_0$. Indeed, we can choose $\xi$ and $n_0$ are such that 
$(\Pmin)^{n_0}(s,s\p)\geq \xi$, where $\Pmin$ is the stochastic matrix
with off-diagonal elements $\Pmin(s,s\p)=\Qmin(s,s\p)/\rmax$.}  
Assumption \ref{as: eta}  of Theorem \ref{th: main} entails directly Condition \ref{as: etalem} of Lemma \ref{lem: forback}. 
Moreover, the formula \eqref{eq: gis} for $g_i(s)$ includes the ``likelihood factor'' $\prod L_j(Y_j|s)$ for at most $k$ 
indices $i$, simply because there are 
$k$ points $\tobs_j$. For the remaining $n-k$ indices, by  \eqref{eq: gis}, we have two-sided bounds
$\gl\leq g_i(s)\leq \gu$, where $\gl$ and $\gu$ do not depend on $i$ and $\gl/\gu>0$. Indeed, we can choose  
$\gl=\qmin \exp[-\rmax(\tmax-\tmin)]$ and $\gu=\rmax$, where 
$\qmin=\min_s \Qmin(s)=\min_s \sum_{s\p\not= s} \Qmin(s,s\p)$. 
Assumption \ref{as: qmin} of Theorem \ref{th: main} entails $R(t;s)\geq\qmin>0$. Consequently, in the conclusion of
Lemma \ref{lem: thin}  we have 
\begin{equation}\nonumber
 \delta_i=\frac{\xi\eta}{|\S|}\left(\frac{\gl}{\gu}\right)^{n_0}=:\delta>0
\end{equation}
for at at least
$n-(k+1)n_0$ indices, with $n+1=|T\p|$ and fixed $(k+1)n_0$. For the remaining indices we put $\delta_i=0$ and thus obtain 
\begin{equation}\label{eq: thin}
 {\Ex(|J\p||T\p)\leq \left(1-{\delta}\right)|T\p|+ (k+1)n_0\delta+\delta.} 
\end{equation}
Consequently, $\Ex(|\jump(X\p)||X)=\Ex\left(\Ex(|J\p||T\p)|X\right)\leq (1-\delta)(|\jump(X)|+\mu)+(k+1)n_0\delta+\delta$. The conclusion
of the proposition follows.
\end{proof}


\begin{proposition}[Small Set Condition]\label{pr: small}
The set $\{X: |\jump(X)|\leq h\}$ is 1-small for every $h$, i.e.\ there exists a probability measure $\reg$ and a constant $\beta>0$ such that $A(X,\d X\p)\geq \beta \reg(\d X\p)$,
whenever $|\jump(X)|\leq h$.
\end{proposition}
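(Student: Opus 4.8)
The plan is to verify the minorization $A(X,\d X\p)\ge\beta\,\reg(\d X\p)$ directly, by decomposing a single step into its two stages and lower-bounding the law of the output by one fixed measure. Write the kernel as
\begin{equation}\nonumber
A(X,\d X\p)=\int K(T\p,\d X\p)\,\EL_X(\d T\p),
\end{equation}
where $\EL_X$ is the law of the potential-jump set $T\p=\jump(X)\cup V$ generated in stage (V) and $K(T\p,\cdot)$ is the conditional law of the new trajectory generated in stage (S) by FFBS followed by discarding virtual jumps. The key structural fact is that $K(T\p,\cdot)$ depends on $T\p$ only through the underlying set of times, not on how those times arose.

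First I would minorize the Poisson part. By Assumptions \ref{as: eta} and \ref{as: rmax}, and since $R(t;s)\ge Q(t;s)\ge\Qmin(s)\ge\qmin$, the intensity of $V$ obeys $\eta\qmin\le R(t;X(t))-Q(t;X(t))\le\rmax$ for every input $X$. Comparing this inhomogeneous Poisson process with the homogeneous one $V_0$ of rate $\eta\qmin$, the Radon--Nikodym derivative at a configuration $\omega$ equals
\begin{equation}\nonumber
\exp\!\left(\int\limits_{\tmin}^{\tmax}\bigl(\eta\qmin-R(t;X(t))+Q(t;X(t))\bigr)\d t\right)\prod_{t\in\omega}\frac{R(t;X(t))-Q(t;X(t))}{\eta\qmin},
\end{equation}
which is bounded below by $c_1:=\exp\bigl(-(\rmax-\eta\qmin)(\tmax-\tmin)\bigr)>0$ uniformly in $X$ and in $\omega$, since every factor of the product is $\ge1$. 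Hence the law of $T\p$ dominates $c_1$ times the law of $\jump(X)\cup V_0$, so that $A(X,\d X\p)\ge c_1\,\Ex_{V_0}\bigl[K(\jump(X)\cup V_0,\d X\p)\bigr]$.

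Second I would produce a fixed lower bound for $\Ex_{V_0}[K(\jump(X)\cup V_0,\cdot)]$. I restrict to the sub-event that $V_0$ deposits sufficiently many, suitably spread points in $[\tmin,\tmax]$ and that the FFBS skeleton does not jump at any of the at most $h$ old times in $\jump(X)$, so that every true jump of $X\p$ lies at a point of $V_0$ while the old times contribute only diagonal factors $P_i(s,s)\ge\eta$ and bounded weights $g_i(s)\in[\gl,\gu]$; being at most $h$ in number, these perturb the unnormalized FFBS weights by a factor in a fixed compact subinterval of $(0,\infty)$ depending on $h$ but not on $X$. Using the standing positivity assumption I fix a reference pattern $(\sigma_1,\dots,\sigma_k)$ of states at the observation times with $L_j(Y_j|\sigma_j)>0$; irreducibility of $\Qmin$, entering through the windowed bound $P_{i-n_0:i}(s,s\p)\ge\xi$ of Lemma \ref{lem: forback} together with $P_i(s,s)\ge\eta$, forces the skeleton to realise this pattern with probability bounded below. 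This singles out one probability measure $\reg$---continuous in the jump times inherited from $V_0$ and carrying the reference state pattern---and a constant $\beta>0$ with $A(X,\cdot)\ge\beta\,\reg$ on $\{|\jump(X)|\le h\}$.

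The step I expect to be the main obstacle is the residual dependence of the output law on the input through the embedded old jumps: $K$ is evaluated at $\jump(X)\cup V_0$ rather than at a fixed set, so a priori the conditional law of $X\p$ varies with $X$. The crux is to show this dependence is harmless, i.e.\ that the at most $h$ extra points alter the unnormalized FFBS weights only by uniformly bounded factors and cannot push any true jump of $X\p$ off the points of $V_0$ with more than a fixed probability. Controlling the FFBS normalizing constant uniformly---it sits in the denominator of every backward transition---is the delicate point, and it is exactly here that Assumptions \ref{as: qmin}--\ref{as: rmax} are used in full, in the same way as in Lemmas \ref{lem: forback} and \ref{lem: thin}.
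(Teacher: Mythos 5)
Your overall strategy coincides with the paper's: minorize the stage-(V) law of $V$ by a homogeneous Poisson process of rate $\eta\qmin$ (your Radon--Nikodym comparison is a repackaging of the paper's superposition $V=V^0\cup V^{\rm rest}$ together with the requirement $V^{\rm rest}=\emptyset$), then force the at most $h$ old jumps to become virtual at cost $\eta^h$ and force the new points to carry a fixed, positive-likelihood skeleton. However, the two steps you flag but do not carry out are exactly where the work lies, so as it stands the argument is incomplete. First, the regeneration measure $\reg$ is never actually constructed. The paper builds it explicitly: pick states $\sss_j$ with $L_j(Y_j|\sss_j)>0$, use irreducibility of $\Qmin$ to interpolate them into a full skeleton $\ss=(\ss_0,\dots,\ss_n)$ with $\ss_{i-1}\neq\ss_i$ and $\nu(\ss_0)\prod_i \Qmin(\ss_i,\ss_{i+1})/\rmax>0$, and choose fixed disjoint intervals $(\tt_{i-1},\tt_i)$ anchored so that $\tt_{i_j}=\tobs_j$; conditioning $V^0$ on placing exactly one point in each interval makes the jump times exactly uniform on $\mathcal{T}$ and, crucially, guarantees $X\p(\tobs_j)=\sss_j$, so that no likelihood factor in \eqref{eq: gis} vanishes along the target skeleton. ``Suitably spread points'' does not by itself ensure the trajectory sits in a positive-likelihood state at each $\tobs_j$, and without that the designated skeleton may have FFBS probability zero; recall the standing assumption only gives $L(Y|X)>0$ with positive probability, not $L_j(Y_j|s)>0$ for all $s$.

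Second, the ``delicate point'' you defer --- a uniform-in-$X$ lower bound on the FFBS probability of producing the designated skeleton, normalizer included --- cannot be read off from Lemma \ref{lem: forback}, which only bounds the diagonal backward transitions $\Pr(S_i=s|S_{i+1}=s)$, not the probability of a specific full path. The paper's missing idea here is to realize stage (S) equivalently by rejection sampling from the prior skeleton chain with the explicit envelope $(\rmax)^{|T\p|}\lmax$ for $\prod_i g_i$: the probability of proposing the target skeleton and accepting it on the first attempt is then a product of explicitly bounded factors (at least $\beta_1^*\eta^{h}$ times $\beta_2$), and the normalizing constant never enters. Equivalently, one can bound the normalizer from above by $(\rmax)^{|T\p|}\lmax$ using stochasticity of the $P_i$, with $|T\p|\leq h+n$ on the good event. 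Either way, this step has to be written out; once it is, together with the explicit construction of $\reg$, your argument becomes the paper's proof.
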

Recall that $A$ denotes the transition kernel of the Markov chain defined via Algorithm 2. 
$\reg$ is a called a regeneration measure.

\begin{proof} 
The scheme of our proof is the following. We will define a sequence of states $\ss=(\ss_0,...,\ss_n)$ and a sequence of times
$\tt=(\tt_0,...,\tt_n)$. Both these sequences are deterministic and fixed. The regeneration measure $\reg(\d X\p)$ is described 
in terms of $\ss$ and $\tt$ as follows:
\begin{equation}\label{eq: regener}
\begin{split}
& T\p_i\sim {\rm Uniform}(\tt_{i-1},\tt_i) \text{ independently for } i=1,\ldots,n;\\
& S\p_i=\ss_i \text{ for } i=0,1,\ldots,n.
\end{split} 
\end{equation}
Trajectory $X\p$ is determined by $(T\p,S\p)$ as described in Section \ref{sec: intro}. Note that the skeleton $S\p$ is 
deterministic and random vector $(T\p_1,\ldots,T\p_n)$ has the uniform distribution on the set 
\[\mathcal{T}=\{(t_1,...,t_n)\;:\; \tt_{i-1}\leq t_{i}\leq \tt_i\text{ for } i=1,\dots,n\}\;.\]
We will show that Algorithm 2 can be equivalently executed in such a way that the resulting $X\p$ 
is distributed according to $\reg$ with probability at least $\beta>0$, provided that $|\jump(X)|\leq h$ 
($\beta$ must not depend on $X$; it will be defined in the course of our proof). 

Now we proceed to details of our construction. To define $\ss$ and $\tt$,  
let us first choose a sequence $\sss=(\sss_1,\sss_2,\dots,\sss_k)$ such that
\[\prod_{j=1}^k L_j(Y_j|\sss_j)=:\lsk>0.\]
Now we are going to use Assumptions \ref{as: qmin} and \ref{as: rmax} of Theorem 
\ref{th: main}. By irreducibility of matrix $\Qmin$ we can embed $\sss$ in a skeleton $\ss$, which has 
probability bounded below for the chain with transition matrices $P_i$ (whatever the choice of the times of jumps,
on which these matrices depend). Put differently, we define a sequence $\ss=(\ss_0,...,\ss_n)$ for some  
$n\geq k$ such that $\sss$ is a subsequence of $\ss$, $\ss_{i-1}\not=\ss_i$ and, uniformly in $t$, we have
\[\nu(\ss_0)\prod_{i=1}^{n} P(t;\ss_i,\ss_{i+1})\geq \nu(\ss_0)\prod_{i=1}^{n} \frac{\Qmin(\ss_i,\ss_{i+1})}{\rmax}=:\beta_1^*>0\;.\]
To get a sequence of times ``compatible with'' the skeleton $\ss$, we embed the sequence $\tobs=(\tobs_1,\ldots,\tobs_k)$ 
in a longer sequence $\tt=(\tt_0,\tt_1,\ldots,\tt_n)$.
More precisely, we choose a sequence  $\tmin=\tt_0<\tt_1<\cdots<\tt_n<\tmax$ such that $\ss_{i_j}=\sss_j$ implies
$\tt_{i_j}=\tobs_j$ for $j=1,\ldots,k$.

Fix $X$ with  $|\jump(X)|\leq h$. We are going to describe a special way in which Algorithm 2 can be executed. 
Note that Assumptions \ref{as: eta} and \ref{as: qmin} of Theorem \ref{th: main} ensure that
\[R(t;s)-Q(t;s)\geq \eta\qmin=:\epsilon>0.\]
In stage (V) we can independently sample two Poisson processes on the interval $[\tmin,\tmax]$, say $V^0$  and  $V^{\rm rest}$,
with intensities $\epsilon$ and $R(t;s)-Q(t;s)-\epsilon$, respectively. Next let $V=V^0\cup V^{\rm rest}$ and $T\p=\jump(X)\cup V$. 
Note that 
\[\Pr(V^0\in\mathcal{T})=:\beta^0>0\;.\]
Moreover, since $V^{\rm rest}$ is a Poisson process with intensity bounded by $\rmax$ we have 
\[\Pr(V^{\rm rest}=\emptyset)\geq \exp\left[-\rmax(\tmax-\tmin)\right]=:\beta^{\rm rest}>0\;.\] 

In stage (S) of the Algorithm 2 we construct skeleton $S\p$.
Although the actual sampling from $p(S\p|T\p,Y)$ is by FFBS, an equivalent result can be obtained via rejection sampling. 
We have $\prod_{i=0}^{|T\p|}g_i(S_i\p)\leq (\rmax)^{|T\p|}\lmax$, where $\lmax=\prod_{j=1}^k \max_s L_j(Y_j|s)$.
The rejection sampling proceeds as follows.
\begin{itemize}
 \item[(S1)] Simulate Markov chain $S\p$ (of length $|T\p|$) with transition matrices $P_i=P(T_i\p;\cdot,\cdot)$ 
  and initial distribution $\nu$, c.f.\ \eqref{eq: genHMM}.
 \item[(S2)] Accept the skeleton $S\p$ with probability $\prod_{i=0}^{|T\p|}  g_i(S\p_{i})/[(\rmax)^{|T\p|}\lmax]$.
 If the skeleton is not accepted then go to (S1).
\end{itemize}
(Of course the rejection method is highly inefficient and is considered only to clarify presentation.) 

We consider the following random events $\mathcal{E}_i$:
\begin{itemize}
 \item $\mathcal{E}_0$: in stage (V) we obtain $T\p=\jump(X)\cup V^0$ and $V^0\in \mathcal{T}$. 
 \item $\mathcal{E}_1$: in stage (S1) all points belonging to $\jump(X)$ are changed to virtual jumps, 
       while jumps at $V^0$ form the skeleton $\ss$.
 \item $\mathcal{E}_2$: in stage (S2) we accept the skeleton obtained in stage (S1).       
\end{itemize}
We can see that
\begin{itemize}
 \item $\mathcal{E}_0$ happens with probability at least $\beta^0\beta^{\rm rest}$.
 \item Given that $\mathcal{E}_0$ has happened, the probability of $\mathcal{E}_1$ is at least 
   $\beta_1^*\eta^{\jump(X)}\geq \beta_1^*\eta^{h}=:\beta_1$ (because $P(T\p_i,s,s)\geq \eta$).
 \item Given that $\mathcal{E}_0$ and $\mathcal{E}_1$ have happened, the probability of $\mathcal{E}_2$ is at least 
  \[\left(\frac{\qmin}{\rmax}\right)^{h+n}\exp\left[-\rmax(\tmax-\tmin)\right]\frac{\lsk}{\lmax}=:\beta_2,\]
  in view of \eqref{eq: gis}, because $|T\p|=|\jump(X)|+|V^0|\leq h+n$.
\end{itemize}
(Of course, all the probabilities in the above statements are conditional on $X$.) Putting everything together,
$\mathcal{E}:=\mathcal{E}_0\cap \mathcal{E}_1\cap \mathcal{E}_2$ happens with probability at least
$\beta:=\beta^0\beta^{\rm rest}\beta_1\beta_2>0$. If $\E$ happens then the output $X\p$ of Algorithm 2 is independent of
the input $X$ and has the probability distribution $\Phi(\d X\p)$ described in the beginning  of this proof. 
\end{proof}

Theorem \ref{th: main} immediately follows from Propositions \ref{pr: drift} and \ref{pr: small}, see for example \citet[Th.\ 9]{roberts2004general}. 

{
\begin{remark}\label{rem: lik}
 Note that Theorem \ref{th: main} holds also for other observation models than given by \eqref{eq: obs}. For example, we could consider the observed object $Y$ of quite general 
 nature but assume that $L(Y|X)\geq \varepsilon>0$.  The proofs of the drift condition and 
 especially of the small set condition would be then much simpler. However,  assumption \eqref{eq: obs} does not imply  $L(Y|X)\geq \varepsilon>0$,
 and we think that this latter condition is less realistic in applications.    
\end{remark}
}
\bigskip\goodbreak

\section{Continuous Time Bayesian Networks}\label{sec: CTBN}

\def\pa{{\rm pa}}
\def\ch{{\rm ch}}
\def\jul{\iota}
\def\tis{\tau}

\def\s{\underline{s}}
\def\pmin{{p^{\rm min}}}
\def\E{\mathcal{K}}

Let $(\N,\E)$ denote a directed graph with possible cycles.  We write $w\to u$ instead of $(w,u)\in\E$.
For every node $w\in\N$ consider a corresponding space $\S_w$ of possible states. 
Assume that each space $\S_w$ is finite.
We consider a continuous time homogeneous Markov process on the product
space $\S=\prod_{w\in\N} \S_w$. Thus a state $\s\in\S$ is a
configuration $\s=(s_w)=(s_w)_{w\in\N}$, where $s_w\in\S_w$. 
If $\W\subseteq\N$ then we write $s_\W=(s_w)_{w\in\W}$ for 
configuration $\s$ restricted to nodes in $\W$. We also use notation
$\S_\W=\prod_{w\in\W} \S_w$, so that we can write $s_\W\in\S_\W$. 
The set $\mathcal{W}\setminus\{w\}$ will be denoted by $\mathcal{W}-w$ and 
$\N\setminus\{w\}$ simply by $-w$.  We define the set of parents of node $w$ by 
\[\pa(w)=\{u\in\N\;:\;u\to w\}\;,\]
and we define the set of children of node $w$ by   
\[\ch(w)=\{u\in\N\;:\;w\to u\}\;.\]
Suppose we have a family of functions
$Q_w:\S_{\pa(w)}\times(\S_w\times \S_w)\to[0,\infty)$.
For fixed $c\in \S_{\pa(w)}$, we consider $Q_w(c;\cdot,\cdot\;)$ as a  conditional intensity matrix 
(CIM) at node $w$ (only off-diagonal elements of this matrix
have to be specified, the diagonal ones are irrelevant).
The state of a CTBN at time $t$ is a random element $X(t)$ of the space 
$\S$ of configurations. Let $X_w(t)$ denote its $w$th coordinate. 
The process $\left\{X_w(t)_{w\in\N},t\geq 0\right\}$ 
is assumed to be Markov and its evolution can be described 
informally as follows. Transitions at node $w$ depend on the current 
configuration of the parent nodes. If the state
of some parent changes, then node $w$ switches to other transition 
probabilities.  Formally, CTBN is a \textit{time-homogeneous} MJP with transition intensities given by  
\begin{equation*}
    Q(\s,\s\p)=
          \begin{cases}
             Q_w(s_{\pa(w)},s_w,s_w\p) & \text{if $s_{-w}=s_{-w}\p$ and $s_{w}\not=s_{w}\p$ for some $w$;} \\        
              0       &  \text{if $s_{-w}\not=s_{-w}\p$ for all $w$,}
          \end{cases}
\end{equation*}
for $\s\not=\s\p$. Define also $Q_w(c;s)=\sum_{s\p\not=s}Q_w(c;s,s\p)$ for $c\in\S_{\pa(w)}$, $s\in\S_{w}$. 

For a CTBN, the density  of sample path  $X=X([\tmin,\tmax])$ in a bounded time interval $[\tmin,\tmax]$ decomposes as follows:
\begin{equation}
 \label{eq:densCTBN}
 p(X)=\nu(X(\tmin))\prod_{w\in\mathcal{N}}p(X_w\Vert X_{\pa(w)})\;,
\end{equation}
where $\nu$ is the initial distribution on $\X$ and $p(X_w\Vert X_{\pa(w)})$ is the density of piecewise 
homogeneous MJP with intensity matrix equal to $Q_w(c;\cdot,\cdot\;)$ in every time sub-interval 
such that $X_{\pa(w)}=c$.  Formulas for the density of CTBN appear e.g.\ in \citet[Sec.\ 3.1]{Nod2},
\citet[Eq.\ 2]{FaXuSh}, \citet[Eq.\ 1]{FaSh} and \citet{CTBNMet2014}.
Our notation $p(X_w\Vert X_{\pa(w)})$ is consistent with the notion of ``conditioning by intervention'', 
see \e.g.\ \citet{Lauritzen01causalinference}. Indeed, $p(X_w\Vert X_{\pa(w)})$ is the density of the process $X_w$ at note $w$ 
under the assumption that the sample paths at the parent nodes $X_{\pa(w)}$ are fixed and $X_w(\tmin)$ is given, see e.g.\ 
\cite{CTBNMet2014}, for details. 
Below we explicitly write an expression for $p(X_w\Vert X_{\pa(w)})$. We need the following notations:
\begin{itemize}
  \item[] Let $\jul_w^{X}(c;\;s,s\p)$ denote the number of jumps from 
$s\in\S_w$ to $s\p\in\S_w$  at node $w$, 
which occurred when the parent configuration was $c\in\S_{\pa(w)}$.
\item[] Let $\tis_w^{X}(c;\; s)$ be the length of time that node $w$ spent in state $s \in \S_w$  when the parent configuration was $c\in\S_{\pa(w)}$.   
\end{itemize}
With these notations we can write 
\begin{equation}\label{eq: cbi}
\begin{split}
       p({X}_w\Vert {X}_{\pa(w)})&=
             \bigg\{\prod_{c\in\S_{\pa(w)}}
                      \prod_{s\in\S_w} \prod_{s\p\in\S_w\atop s\p\not=s} 
                      Q_w(c;\; s,s\p)^{\jul_w^{{X}}(c;\; s,s\p)}\bigg\}\\
             &\times\bigg\{\prod_{c\in\S_{\pa(w)}}
                       \prod_{s\in\S_w} \exp\left[-Q_w(c;\; s)
                              \tis_w^{X}(c;\; s)\right]\bigg\}.
\end{split}
\end{equation}
Let us also write $p^{\rm jump}({X}_w\Vert {X}_{\pa(w)})$ and $ p^{\rm stay}({X}_w\Vert {X}_{\pa(w)})$ for 
the first and second expression in \eqref{eq: cbi}, respectively, to facilitate future references. 

The problem of probabilistic reasoning for a CTBN can be formulated as follows. Assume that the available  evidence is the \textit{complete} 
observation of some nodes, say $X_\O([\tmin,\tmax])$ for some set $\O\subset\N$. We are to compute the posterior distribution over unobserved nodes, i.e.\ on 
the trajectories of $X_{\W}([\tmin,\tmax])$, where $\W=\N\setminus\O$. The basic idea, proposed in \citep{RaoTeh2013a} is to use ``Algorithm 2 within Gibbs sampler".
Let us fix a node $w\in\W$. By \eqref{eq:densCTBN}, the full conditional distribution is the following.
\begin{equation}
 \label{eq: fullconditionals}
p(X_w|X_{-w})\propto \nu(X_w(\tmin)|X_{-w}(\tmin)) p(X_w\Vert X_{\pa(w)})\prod_{u\in \ch(w)}p(X_u\Vert X_{\pa(u)})\;.
\end{equation}
The density $\nu(X_w(\tmin)|X_{-w}(\tmin))p(X_w\Vert X_{\pa(w)})$ corresponds to a piecewise homogeneous Markov process and can be treated as the prior distribution in
Algorithm 2. The expression $\prod_{u\in \ch(w)}p(X_u\Vert X_{\pa(u)})$
can be treated as likelihood. 
In Algorithm 2 we can use  ,,instrumental'' intensities $R_w(t;s)$ different for different nodes $w$, and possibly 
time-inhomogeneous (in practical implementations however, $R_w$ will usually be time-homogeneous).
\begin{algorithm}[H]\label{alg: CTBN}
 \caption{Random scan Gibbs sampler for CTBN.}
 \begin{algorithmic}
  \STATE {\bf input:} previous trajectory $X_{\W}$ and evidence $X_{\O}$.
  \begin{itemize}
 \item[] Choose at random {$w\in\W$ (according to some probability distribution on $\W$).}  
 \item[] Update $X_w$ to $X_w\p$. Apply Algorithm 2 with the target distribution $p(X_w|X_{-w})$ given by 
  \eqref{eq: fullconditionals} 
  \COMMENT{ \textit{$X_{-w}$ includes all nodes $u\in\O$ as well as $u\in\W$, $u\not=w$ }}.
 \end{itemize}
  \RETURN new trajectory  $X\p_{\W}=(X_{\W-w},X_w\p)$.
  \end{algorithmic}
\end{algorithm}

\begin{theorem}\label{th: ctbn} Consider a CTBN in which we observe trajectories $X_\O([\tmin,\tmax])$.
Assume that 
\begin{enumerate}
 \item\label{as: qminctbn} for every $w\in\W=\N\setminus\O$ there exists an irreducible matrix $\Qmin_w$ such that 
   $\Qmin_w(s,s\p)\leq Q_w(c;s,s\p)$ for all $s,s\p\in\S_w$, $s\not=s\p$, $c\in\S_{\pa(w)}$,
 \item\label{as: etactbn} there exists $\eta>0$ such that $Q_w(c;s)/R_w(t;s)\leq 1-\eta$ for all $w\in\W$, $s\in\S_w$,
   $c\in \S_{\pa(w)}$ and $t\in[\tmin,\tmax]$,
 \item\label{as: rmaxctbn} there exists $\rmax<\infty$ such that $R_w(t;s)\leq\rmax$ for all $w\in\W$, $s\in\S_w$,
 $t\in[\tmin,\tmax]$,
 \item\label{as: supp} for every $w\in\N$, the supports of $Q_w(c;\cdot,\cdot)$ do not depend on the parent configuration $c\in\S_{\pa(w)}$, i.e.
  $Q_w(c;s,s\p)>0$ implies  $Q_w(c\p;s,s\p)>0$.
 \end{enumerate}
Then the chain $(X_{\W})_m$ produced by Algorithm 3 is geometrically ergodic. 
 \end{theorem}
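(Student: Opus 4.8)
The plan is to follow the same two–ingredient strategy as in Section~\ref{sec: main}: establish a geometric drift condition and a small set condition for the transition kernel of Algorithm~3, and then conclude via the drift--small-set criterion exactly as for Theorem~\ref{th: main}. Since one step of the random scan sampler updates a single coordinate $X_w$ by one pass of Algorithm~2 with target \eqref{eq: fullconditionals}, the natural Lyapunov function is a weighted count of true jumps, $V(X_\W)=\sum_{w\in\W}\alpha_w|\jump(X_w)|$ for suitable weights $\alpha_w>0$, and the first task is to control the effect of a single-coordinate update on $|\jump(X_w)|$.

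First I would analyse the conditional update at a fixed node $w$, reducing it to the computation behind Proposition~\ref{pr: drift}. Writing the full conditional \eqref{eq: fullconditionals} as prior times likelihood, the factor $\nu(\cdot)\,p(X_w\Vert X_{\pa(w)})$ plays exactly the role of the prior MJP in Section~\ref{sec: main}: Assumption~\ref{as: qminctbn} yields Condition~\ref{as: xi} of Lemma~\ref{lem: forback} (via a power $(\Pmin_w)^{n_0}$ of the matrix $\Pmin_w=\Qmin_w/\rmax$), Assumption~\ref{as: etactbn} yields Condition~\ref{as: etalem}, and Assumption~\ref{as: rmaxctbn} together with $R_w\ge\qmin$ yields uniform two-sided bounds on the prior part of the weights $g_i$. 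The new feature is the likelihood $\prod_{u\in\ch(w)}p(X_u\Vert X_{\pa(u)})$, which by \eqref{eq: cbi} contributes to $g_i(s)$, on the interval where $X_w=s$, a product of child stay factors $\exp[-Q_u(\cdot)\tau]$ and child jump factors $Q_u(c;\,a,b)$ whose parent configuration $c$ depends on $s$. Here Assumption~\ref{as: supp} is essential: it guarantees that every such jump factor is strictly positive for all values of $s$, so each $g_i(s)$ stays bounded away from $0$ and $\infty$, and the intervals on which the ratio $\gl_i/\gu_i$ fails to be uniformly bounded below are exactly those containing a jump of some child, hence at most $\sum_{u\in\ch(w)}|\jump(X_u)|$ in number. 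Feeding this into Lemma~\ref{lem: thin} as in Proposition~\ref{pr: drift} yields a single-node drift of the form $\Ex(|\jump(X_w')|\mid X_{-w})\le(1-\delta)|\jump(X_w)|+a\sum_{u\in\ch(w)\cap\W}|\jump(X_u)|+b$, with $\delta>0$ and $a,b<\infty$ independent of the current state.

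The main obstacle is assembling these single-node inequalities into a genuine contraction for the whole kernel, because the additive term couples each node to its children through the graph $(\N,\E)$, which may contain cycles. Averaging over the random choice of the updated node and over the coordinates left unchanged, I would look for weights $\alpha_w$ making the per-step contraction $(1-\delta)$ on the updated coordinate outweigh the feedback it receives from its children; in matrix terms this requires the nonnegative coupling operator induced by $a$ and the child relation to have spectral radius below the margin supplied by $\delta$. If such weights exist one obtains $\Ex(V(X_\W')\mid X_\W)\le\gamma V(X_\W)+c$ with $\gamma<1$. Establishing the existence of these weights—equivalently, that the contraction dominates the graph-induced coupling—is the crux of the argument and the step I expect to demand the most care.

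Finally, for the small set condition I would restrict to the sublevel sets $\{X_\W:V(X_\W)\le h\}$. On such a set every unobserved trajectory, and every fixed observed one, has a bounded number of jumps, so the child likelihood $\prod_{u\in\ch(w)}p(X_u\Vert X_{\pa(u)})$ is bounded below by a positive constant depending only on $h$; this places each conditional update in the benign regime of Remark~\ref{rem: lik}. I would then construct a regeneration as in Proposition~\ref{pr: small}, but cascaded over one full sweep of the coordinates of $\W$ in a fixed order: at each visited node the update produces, with probability bounded below uniformly on $\{V\le h\}$, a prescribed deterministic skeleton on a prescribed random grid independent of the input, using Assumption~\ref{as: supp} to ensure that the target density of the chosen skeleton is positive. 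The product of these uniform lower bounds shows that $\{V\le h\}$ is small for an $|\W|$-step kernel, and geometric ergodicity then follows from the drift and small set conditions exactly as for Theorem~\ref{th: main}.
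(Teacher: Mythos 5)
There is a genuine gap, and it sits exactly where you flagged it: the assembly of the single-node drift inequalities into a contraction for the whole kernel. Your single-node bound has the form
$\Ex(|\jump(X_w\p)|\mid X)\le(1-\delta)|\jump(X_w)|+a\sum_{u\in\ch(w)\cap\W}|\jump(X_u)|+b$,
obtained by declaring every inter-jump interval that contains a child jump to be a ``bad'' index with $\delta_i=0$; the coefficient $a$ you get this way is of the same order as the contraction margin $\delta$. Averaging over the random scan and collecting coefficients of $|\jump(X_u)|$, your weighted Lyapunov function $V=\sum_w\alpha_w|\jump(X_w)|$ contracts only if $\alpha_u>\sum_{w\in\pa(u)\cap\W}\alpha_w$ (up to constants) for every $u$, i.e.\ only if the child-coupling operator has spectral radius below one. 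But the paper explicitly allows the graph $(\N,\E)$ to contain cycles: for two mutually dependent unobserved nodes $u\to w\to u$ this would require $\alpha_u>\alpha_w$ and $\alpha_w>\alpha_u$ simultaneously. So the ``crux'' you defer is not merely delicate --- the route you propose cannot close in general, and no choice of weights rescues it.

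The paper's proof of Proposition \ref{pr: DriftCTBN} avoids this with two ideas you are missing. First, it does not seek a per-node contraction for every node: it singles out the node $w$ with the \emph{maximal} number of jumps, for which $\sum_{u\in\W}|\jump(X_u)|\le|\W|\,|\jump(X_w)|\le|\W|\,|T\p|$, and only proves the local drift \eqref{eq: localdrift} at that node; since that node is selected with probability at least $\pmin$, a local contraction there already forces a global contraction of the plain unweighted count $|\jump(X_\W)|$ with factor $1-\pmin\eps_0/|\W|$. Second, instead of zeroing out intervals containing child jumps, it keeps the state-dependent bound $\delta_i=a\rh^{|\jump(X_{\ch(w)\cap\W}^{(i)})|}$ and invokes the convexity Lemma \ref{lem: lagrange}: because the sum of the exponents over the $|T\p|-n_0-1$ relevant indices is at most $2(|T\p|-n_0-1)|\W|$ (precisely thanks to the maximality of $w$), one gets the \emph{uniform} per-index lower bound $\delta_i$-sum $\ge a(|T\p|-n_0-1)\rh^{2|\W|}$, hence a genuine $(1-a\rh^{2|\W|})$ contraction with no additive child term at all. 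Your treatment of the likelihood factors via Assumption \ref{as: supp}, and your small-set construction (which the paper in fact simplifies further, regenerating to a single deterministic constant trajectory over $|\W|$ steps), are essentially sound; it is the drift argument that needs to be replaced by the max-node-plus-convexity device.
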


As in Section \ref{sec: main}, we will show a drift condition towards a small set. The Lyapunov function in the CTBN setting 
will be the global number of jumps $|\jump(X_{\W})|=\sum_{w\in\W} |\jump(X_w)|$.  
Let us begin with an elementary fact, needed in the proof of the drift condition.

\begin{lemma}\label{lem: lagrange} Let $0\leq\rh<1$. For any $x_1,\ldots,x_n$ such that $\sum_{i=1}^n x_i\leq nb$,
\[\sum_{i=1}^n \rh^{x_i}\geq n\rh^b.\]
\end{lemma}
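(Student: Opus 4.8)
The plan is to exploit the convexity of the map $x\mapsto\rh^{x}$ and to feed in the constraint $\sum_{i}x_i\le nb$ through a single supporting-line inequality. First I would dispose of the degenerate case $\rh=0$: there $\rh^{b}=0$ for $b>0$, so the right-hand side vanishes and the inequality is trivial, while $b=0$ together with any sensible convention forces the estimate directly. For $0<\rh<1$ I would write $\rh^{x}=\e^{x\ln\rh}$ and note that, since $\ln\rh<0$, this is a strictly convex function of the real variable $x$, with derivative $(\ln\rh)\,\rh^{x}$.

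Next I would use the tangent line to $x\mapsto\rh^{x}$ at the point $x=b$. Convexity gives, for every real $x$,
\begin{equation}\nonumber
\rh^{x}\ge \rh^{b}+(\ln\rh)\,\rh^{b}\,(x-b).
\end{equation}
Summing over $i=1,\ldots,n$ then yields
\begin{equation}\nonumber
\sum_{i=1}^{n}\rh^{x_i}\ge n\rh^{b}+(\ln\rh)\,\rh^{b}\Big(\sum_{i=1}^{n}x_i-nb\Big).
\end{equation}
The key observation is that the correction term is nonnegative: we have $\rh^{b}>0$, $\ln\rh<0$, and by hypothesis $\sum_{i}x_i-nb\le 0$, so the product of the three factors is $\ge 0$. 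Discarding it gives $\sum_{i}\rh^{x_i}\ge n\rh^{b}$, as claimed.

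There is no genuine obstacle here, the statement being elementary; the only points demanding care are bookkeeping ones. I must get the sign of $\ln\rh$ right so that the hypothesis $\sum_{i}x_i\le nb$ pushes the inequality in the favourable direction, and I must handle the boundary value $\rh=0$ separately, since the derivative argument does not apply there. An equivalent route I would keep in reserve is Jensen's inequality in the form $\tfrac1n\sum_{i}\rh^{x_i}\ge\rh^{\bar x}$ with $\bar x=\tfrac1n\sum_{i}x_i\le b$, followed by the monotonicity $\rh^{\bar x}\ge\rh^{b}$ of the decreasing function $\rh^{\,\cdot}$; this is the same computation repackaged and is equally short.
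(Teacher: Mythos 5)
Your proof is correct, and it takes a genuinely different route from the paper's. The paper treats the bound as a constrained optimisation problem: it reduces without loss of generality to the boundary case $\sum_{i}x_i=nb$, forms the Lagrange function $\mathcal{L}(x,\lambda)=\sum_i\rh^{x_i}+\lambda\bigl(\sum_i x_i-nb\bigr)$, notes that stationarity forces $x_1=\cdots=x_n=b$, and reads off the minimum value $n\rh^{b}$. Your tangent-line argument (equivalently, Jensen's inequality followed by the monotonicity of $t\mapsto\rh^{t}$) reaches the same conclusion in a single displayed inequality and is, if anything, more self-contained: the Lagrange computation as written only identifies a critical point, and one still needs the convexity of $x\mapsto\rh^{x}$ --- which you invoke directly --- to certify that this critical point is a global minimum rather than a saddle, as well as to justify pushing the constraint to the boundary. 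Your version absorbs both of these steps automatically, since the supporting-line inequality is valid for all real $x$ and the slack $\sum_i x_i-nb\le 0$ enters with the correct sign. The only point of divergence is the degenerate case $\rh=0$, which you rightly flag as needing a separate word and which the paper passes over; in the application (where $\rh$ is a ratio of strictly positive constants) nothing is lost either way.
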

\begin{proof} Without loss of generality we can assume that $\sum_{i=1}^n x_i= nb$.
 Consider the following  constrained optimisation problem:
 \[\textrm{minimize}\: \sum\rh^{x_i}\]
subject to $\sum x_i= nb$. The corresponding Lagrange function is
 \[\L(x,\lambda)=\sum\rh^{x_i}+\lambda\left(\sum x_i-nb\right)\]
 and its partial derivatives are
 \[\frac{\partial }{\partial x_i}\L(x,\lambda)=\rh^{x_i}\log\rh+\lambda\;.\]
 Therefore the minimizer is $x_1=x_2=\cdots=x_n=b$ and the conclusion follows.
\end{proof}

\begin{proposition}[Drift Condition]\label{pr: DriftCTBN}
 Under the assumptions of Theorem \ref{th: ctbn},
there exist $\eps>0$ and $c<\infty$ such that in a single step of Algorithm \ref{alg: CTBN},
$\Ex(|\jump(X_\W\p)||X)\leq (1-\eps)|\jump(X_\W)|+c$.   
\end{proposition}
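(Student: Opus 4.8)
The plan is to reduce the drift for Algorithm~\ref{alg: CTBN} to a single-node analogue of Proposition~\ref{pr: drift} and then to assemble the contributions of the individual nodes. In one step of the random scan a node $w\in\W$ is drawn with probability $p_w>0$ and only $X_w$ is updated, so
\[|\jump(X_\W\p)|=|\jump(X_w\p)|+\sum_{v\in\W,\,v\neq w}|\jump(X_v)|,\]
with every summand but the first left unchanged. Hence it suffices to prove, for each fixed $w$, a one-node drift
\[\Ex(|\jump(X_w\p)|\mid X)\leq(1-\delta_w)|\jump(X_w)|+c_w+b_w\!\!\sum_{u\in\ch(w)}\!\!|\jump(X_u)|,\]
and then to combine these bounds. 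The one-node estimate follows the proof of Proposition~\ref{pr: drift}: stage~(V) adds at most $\mu=\rmax(\tmax-\tmin)$ to $\Ex(|T\p|\mid X)$, and stage~(S) thins $T\p$ by FFBS, to which Lemmas~\ref{lem: forback} and~\ref{lem: thin} apply for the chain \eqref{eq: fullconditionals}. Assumption~\ref{as: qminctbn} yields Condition~\ref{as: xi} through $(\Pmin)^{n_0}(s,s\p)\geq\xi$ with $\Pmin(s,s\p)=\Qmin_w(s,s\p)/\rmax$, and Assumption~\ref{as: etactbn} yields Condition~\ref{as: etalem}, exactly as in Theorem~\ref{th: main}.

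The crux of the one-node estimate is the form of the weights $g_i$ in \eqref{eq: genHMM}, where now the role of the likelihood is played by $\prod_{u\in\ch(w)}p(X_u\Vert X_{\pa(u)})$. By \eqref{eq: cbi} the counts $\jul_u^{X}$ and the occupancy times $\tis_u^{X}$ are additive in time, so for fixed candidate times $T\p$ this child likelihood factorises over the inter-jump intervals of $T\p$ and is absorbed into the $g_i$: each $g_i(s)$ collects the jump factors $Q_u(c;\cdot,\cdot)$ and the occupancy factors $\exp[-Q_u(c;\cdot)\,\tis]$ produced in the $i$-th interval when $X_w=s$ (so that $c_w=s$). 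I would bound these $g_i$ from both sides. The occupancy factors contribute a constant over any window of $n_0$ intervals, since the interval lengths sum to at most $\tmax-\tmin$. The jump factors are where Assumption~\ref{as: supp} enters: because the support of $Q_u(c;\cdot,\cdot)$ is independent of $c$, every jump actually present in the fixed trajectory $X_u$ has positive intensity for \emph{every} candidate state of $w$, hence lies between some $\qmin_u>0$ and $\max Q_u<\infty$. The ratio $\gl_i/\gu_i$ therefore degrades only like $\rh^{x_i}$, where $x_i$ is the number of child jumps falling in interval $i$ and $\rh\in(0,1]$ is assembled from $\qmin_u/\max Q_u$, $\eta$ and $\rmax$. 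This is precisely what Lemma~\ref{lem: lagrange} handles: the total $\sum_i x_i=\sum_{u\in\ch(w)}|\jump(X_u)|$ does not change while $X_w$ is updated, so applying Lemma~\ref{lem: lagrange} to the per-window exponents bounds $\sum_i\delta_i$ from below, and Lemma~\ref{lem: thin} then gives the one-node drift above with $\delta_w\in(0,1)$ and finite $c_w,b_w$ depending only on the constants of Theorem~\ref{th: ctbn}.

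It remains to combine the one-node drifts over the scan, and \textbf{this is the main obstacle}. With a weighted count $V(X_\W)=\sum_{w\in\W}\alpha_w|\jump(X_w)|$ (weights $\alpha_w>0$ to be chosen), averaging over $w$ and regrouping the terms $b_w\sum_{u\in\ch(w)}|\jump(X_u)|$ according to the node $u$ they involve gives
\[\Ex(V(X_\W\p)\mid X)\leq V(X_\W)-\sum_{u\in\W}\Big(\alpha_u p_u\delta_u-\!\!\sum_{w\in\pa(u)\cap\W}\!\!\alpha_w p_w b_w\Big)|\jump(X_u)|+\text{const},\]
the constant absorbing the observed children $u\in\O$, whose jump counts are fixed. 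Unlike the finitely many fixed times $\tobs_j$ of Theorem~\ref{th: main}, the child likelihood feeds into each parent update a penalty proportional to the \emph{current} number of child jumps, so the difficulty is to show that this parent--child feedback is dominated by the thinning contraction. I would resolve it by choosing the weights $\alpha_w$ so that every bracket is at least $\eps\alpha_u$ for some $\eps>0$; showing that such a choice exists is the technical heart of the proof and relies on the control of the feedback constants $b_w$ relative to $\delta_w$ afforded by Lemma~\ref{lem: lagrange}, together with the finiteness of $(\N,\E)$. Since the chosen $\alpha_w$ stay in a fixed compact subset of $(0,\infty)$, the weighted drift is equivalent to the asserted drift for $|\jump(X_\W)|$, and Proposition~\ref{pr: DriftCTBN} follows.
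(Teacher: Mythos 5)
Your strategy --- a uniform per-node drift followed by a weighted combination over the whole scan --- is genuinely different from the paper's, and both places where it differs are where it breaks down.

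First, the one-node drift you assert, with a \emph{linear} additive penalty $b_w\sum_{u\in\ch(w)}|\jump(X_u)|$, does not follow from Lemmas \ref{lem: forback}, \ref{lem: thin} and \ref{lem: lagrange}. The per-interval lower bound has the form $\delta_i=a\rh^{x_i}$ with $x_i$ the number of child jumps falling in the $i$-th inter-potential-jump interval, and Lemma \ref{lem: lagrange} converts $\sum_i x_i\leq nb$ into $\sum_i\delta_i\geq na\rh^{b}$; the contraction coefficient you can extract is therefore $a\rh^{(\sum_i x_i)/n}$ with $n\approx|T\p|$. For an arbitrary fixed $w$ the ratio $\sum_{u\in\ch(w)}|\jump(X_u)|/|T\p|$ is unbounded over the state space (a parent with two jumps, a child with a million), so no constant $\delta_w>0$ independent of $X$ exists, and no additive term $b_w\sum_u|\jump(X_u)|$ can repair this: the degradation is exponential in the child-to-parent jump ratio, not linear in the child jump count. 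The paper sidesteps this by running the refined analysis \emph{only} for a node $w$ with the maximal number of jumps, for which $\sum_{i}|\jump(X_{\ch(w)\cap\W}^{(i)})|\leq|\jump(X_\W)|\leq|\W|\,|\jump(X_w)|\leq|\W|\,|T\p|$, so the exponent handed to Lemma \ref{lem: lagrange} is the constant $2|\W|$ and the local contraction $(1-a\rh^{2|\W|})$ is uniform in $X$. All other nodes get only the trivial bound $|\jump(X_u)|+\mu$, and since the max-jump node carries at least a $1/|\W|$ fraction of $|\jump(X_\W)|$ and is selected with probability at least $\pmin$, the global drift follows with contraction $\pmin\eps_0/|\W|$ --- no weighted Lyapunov function is needed.

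Second, even granting your one-node drift, the combination step is the part you explicitly leave open, and it is not routine: the graph may contain cycles, so the system $\alpha_u p_u\delta_u>\sum_{w\in\pa(u)\cap\W}\alpha_w p_w b_w$ is a spectral condition on the parent--child feedback matrix and need not admit a positive solution without a quantitative smallness relation between the $b_w$ and the $\delta_w$ that you have not established (and which, by the first point, is not available uniformly in $X$). The single observation that one may restrict the careful estimate to the node with the most jumps removes both obstacles at once.
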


\begin{proof}
For a given $X_\W$, there always exists a node $w\in\W$ such that $|\jump(X_w)|\geq |\jump(X_\W)|/|\W|$, e.g.\ a node with maximum number of jumps.
From now on, node $w$ is fixed. We will prove that for some $c_0$ and $\eps_0>0$,
\begin{equation}\label{eq: localdrift}
 \Ex(|\jump(X_w\p)||X,\text{node $w$ is updated})\leq (1-\eps_0)|\jump(X_w)|+c_0.
\end{equation}
This is a ``local version'' of the drift condition. The conclusion of the proposition will easily follow from
\eqref{eq: localdrift}.  Indeed, for every node $u$ we have $\Ex(|\jump(X_u\p)||X,\text{node $u$ is updated})$ $\leq |\jump(X_u)|+\mu$, 
with $\mu:=\rmax(\tmax-\tmin)$,  as in the proof of Proposition \ref{pr: drift}. 
Let $\pmin$ denote the minimum probability of selecting a node for update in Algorithm 3. 
The singled out node $w$ is chosen with probability at least $\pmin$. Therefore 
 \begin{equation}\nonumber
 \begin{split}
  \Ex(|\jump(X_\W\p)||X)&\leq \pmin \left((1-\eps_0) |\jump(X_w)|+c_0+|\jump(X_{\W-w})|\right)+(1-\pmin)\left(|\jump(X_\W)|+\mu\right)\\
                     &\leq \pmin \left((1-\eps_0)|\jump(X_w)|+|\jump(X_\W)|-|\jump(X_w)|\right) +(1-\pmin)|\jump(X_\W)|+c_0+\mu\\
                    &\leq |\jump(X_\W)|-\pmin\eps_0|\jump(X_w)|+c_0+\mu\\
                     &\leq |\jump(X_\W)|\left(1-\pmin\frac{\eps_0}{|\W|}\right)+c_0+\mu,
   \end{split}
 \end{equation}
which is the desired conclusion.

It remains to show \eqref{eq: localdrift}. Since node $w$ is fixed, we will omit subscript $w$ in notation whenever the context permits.   
The reasoning leading to \eqref{eq: localdrift}  is similar to the proof of  Proposition \ref{pr: drift},  but more delicate  
due to a different form of the likelihood. The role of observed $Y$ is now played by $X_{-w}$. We can write the conditional distribution of
the skeleton $S$ (at node $w$) given times of possible jumps $T$ (at node $w$) in the same form as \eqref{eq: genHMM}, namely 
\begin{equation}\label{eq: CTBNHMM}
 p(S|T,X_{-w})\propto \tilde\nu(S_0)g_0(S_0)\prod_{i=1}^N P_i(S_{i-1},S_i) g_i(S_i),
\end{equation}
where $N+1=|T|$, {$\tilde\nu(X_w(\tmin)):=\nu(X_w(\tmin)|X_{-w}(\tmin))$} and $P_i$s are defined by $P_i(s,s\p)=P(T_i,s,s\p)$, as in \eqref{eq: genHMM},
with
\begin{equation}
 P(t;s,s^\prime)=\begin{cases}
          \dfrac{Q_w(X_{\pa(w)}(t);s,s^\prime)}{R_w(t;s)}&\text{ if }s\neq s^\prime;\\
          &\\  
          1-\dfrac{Q_w(X_{\pa(w)}(t);s)}{R_w(t;s)} & \text{ if }s=s^\prime.
         \end{cases}
\end{equation}
(The ``prior distribution'' at node $w$ is that of a piecewise-homogeneous MJP). 
However, the expressions for the $g_i$s are different than \eqref{eq: gis}. We now have
\begin{equation}\label{eq: gisCTBN}
\begin{split}
 g_{i-1}(s)&=R_w(T_i,s)  \exp\left[-\int\limits_{T_{i-1}}^{T_i} R_w(t;s) \d t\right]\times  \prod_{u\in\ch(w)}p(X_u^{(i)}\Vert X_{\pa(u)}^{(i,w,s)}),
\: (i=1,\ldots,N),\\
  g_{N}(s)&=  \exp\left[-\int\limits_{T_{N}}^{\tmax} R_w(t;s) \d t\right]\times \prod_{u\in\ch(w)}p(X_u^{(N)}\Vert X_{\pa(u)}^{(N,w,s)}), 
\end{split}
\end{equation} 
where $X^{(i)}=X([T_{i-1},T_i))$ is the process restricted to an inter-jump-at-$w$ interval and $X_{\pa(u)}^{(i,w,s)}$ denotes the trajectories
of $X_{\pa(u)}^{(i)}$ with $X_w^{(i)}$ replaced by $s\in\S_w$. The likelihood parts  $p(\cdot\Vert\cdot)$ in equation \eqref{eq: gisCTBN}
can be decomposed into $p(\cdot\Vert\cdot)=p^{\rm jump}(\cdot\Vert\cdot) \times p^{\rm stay}(\cdot\Vert\cdot)$, c.f.\ \eqref{eq: cbi}. 
If we write 
\begin{equation}\nonumber
 g_i(s)=h_i(s) \prod_{u\in\ch(w)}p^{\rm jump}(X_u^{(i)}\Vert X_{\pa(u)}^{(i,w,s)}),
\end{equation}
then $h_i$ is easy to bound (at least qualitatively), because by \eqref{eq: cbi} we have
\begin{equation}\nonumber
  \exp\left[-\rmax(\tmax-\tmin)|\N|\right]\leq \prod_{u\in\ch(w)}p^{\rm stay}(X_u^{(i)}\Vert X_{\pa(u)}^{(i,w,s)})\leq 1.
\end{equation}
The part of $h_i$ which corresponds to the prior can be bounded analogously as in the proof of Proposition \ref{pr: drift} and thus 
we obtain 
\begin{equation}\label{eq: hiBound}
 \qmin \exp\left[-\rmax(\tmax-\tmin)|\right]\leq h_i(s)\leq \rmax,
\end{equation}
with $\qmin=\min_w\min_s \Qmin_w(s)$, c.f.\ Assumption \ref{as: qminctbn} of Theorem \ref{th: ctbn}.

We are now left with a task of bounding the  expression with $p^{\rm jump}$. This is more difficult, because this part of the likelihood depends on
$|\jump(X_\W)|$. By Assumptions \ref{as: etactbn} and
\ref{as: rmaxctbn} of Theorem \ref{th: ctbn}, we have $Q_u(c;s,s\p)\leq \rmax$ for all $s\not=s\p$, $s,s\p\in\S_u$, $c\in\S_{\pa(u)}$, for all $u$.  
To obtain a lower bound, we define $\qmin_+$ as the minimum of \textit{nonzero} values of $Q_u(c;s,s\p)$, $s\not=s\p$,
$s,s\p\in\S_u$, $c\in\S_{\pa(u)}$, all $u$. 
From \eqref{eq: cbi} it follows that
\begin{equation}\label{eq: jumpBound}
  \left(\qmin_+\right)^{|\jump(X_{\ch(w)}^{(i)})|}\leq \prod_{u\in\ch(w)}p^{\rm jump}(X_u^{(i)}\Vert X_{\pa(u)}^{(i,w,s)})\leq 
\left(\rmax\right)^{|\jump(X_{\ch(w)}^{(i)})|}.
\end{equation}
Note that Assumption \ref{as: supp} of Theorem \ref{th: ctbn} is needed to justify the lower bound above. Indeed, under the obvious assumption that $X$ is possible, i.e.\ 
$p(X)>0$, the jumps of $X_u^{(i)}$ are possible under the configuration $X_{\pa(u)}^{(i)}$.
By Assumption \ref{as: supp} of Theorem \ref{th: ctbn}, 
they must be possible also under the configuration $X_{\pa(u)}^{(i,w,s)}$. Combining \eqref{eq: hiBound} and  \eqref{eq: jumpBound} we obtain
 \begin{equation}\nonumber
  a_1\rh_1^{|\jump(X_{\ch(w)}^{(i)})|}\leq g_i(s)\leq a_2\rh_2^{|\jump(X_{\ch(w)}^{(i)})|}
 \end{equation}
for some constants $a_1,a_2$ and $\rh_1,\rh_2$ which depend only on the parameters of the network and on the instrumental intensity $R$ (they depend neither on $w$ nor on $i$).  
Of course, $|\jump(X_{\ch(w)}^{(i)})|=|\jump(X_{\ch(w)\cap \W}^{(i)})|+|\jump(X_{\ch(w)\cap\O}^{(i)})|$. Since $X_\O$ is fixed, 
the parts with the exponent ${|\jump(X_{\ch(w)\cap\O}^{(i)})|}$ can be absorbed in constants, which leads to the bound
 \begin{equation}\label{eq: giCTBN}
  \tilde a_1 \rh_1^{|\jump(X_{\ch(w)\cap \W}^{(i)})|}\leq g_i(s)\leq \tilde a_2\rh_2^{|\jump(X_{\ch(w)\cap \W}^{(i)})|}.
 \end{equation}

Assume that stage (V) of Algorithm 2 has been completed, resulting in a new set $T\p$ of potential times of jumps at node $w$. 
Just as in the proof of Proposition \ref{pr: drift}, we infer that $\Ex(|T\p||X,\text{$w$ is updated})\leq |\jump(X_w)|+\mu$, where $\mu=\rmax(\tmax-\tmin)$. 
Now consider stage (S) of Algorithm 2. After sampling a new skeleton $S\p$ at node $w$, the set $T\p$  is ``thinned'' of to the set of true jumps
$\jump(X_w\p)$. We are to bound $|\jump(X_w\p)|$ from above.   
We are going to apply Lemmas \ref{lem: forback} and \ref{lem: thin} to the skeleton chain
$S\p$ at node $w$ which has the probability distribution $p(S\p|T\p,X_{-w})$ given by
\eqref{eq: CTBNHMM}. {We now decompose the process $X$ into inter-jump parts according to $T\p$ and use the notation $X^{(i)}:=X([T_{i-1}\p,T_i\p))$.}     
From \eqref{eq: giCTBN} we infer that the conclusion of Lemma \ref{lem: forback} holds with
 \begin{equation}\label{eq: rhojump}
  \delta_i=a\rh^{|\jump(X_{\ch(w)\cap\W}^{(i)})|}
 \end{equation}
for some constants $a$, $\rh$, $n_0$ and for $i\geq n_0$. {Indeed, we can choose 
$n_0$ and $\xi$ such that $(\Pmin_w)^{n_0}(s,s\p)\geq \xi$, where $\Pmin_w$ is the stochastic matrix
with off-diagonal elements $\Pmin_w(s,s\p)=\Qmin_w(s,s\p)/\rmax$, c.f. Assumption \ref{as: qminctbn} of  Theorem \ref{th: ctbn}.
Then put $\rh=(\rh_2/\rh_1)^{n_0}$ and $a=(\tilde a_2/\tilde a_1)\xi\eta/|\S_w|$, where $\eta$ appears in Assumption
\ref{as: etactbn} of Theorem \ref{th: ctbn}.  }

We are now prepared to use Lemma \ref{lem: lagrange}. To verify its assumption, it is necessary to bound the sum of the exponents in \eqref{eq: rhojump}. 
Recall that  $\jump(X_u^{(i)})$ is the set of jumps at $u$ in the time interval between consecutive potential jumps at $w$, i.e.\ $[T\p_{i-1},T\p_i)$.  
Therefore
\begin{equation}\nonumber
 \sum_{i=n_0}^{|T\p|-2} |\jump(X_{\ch(w)\cap\W}^{(i)})|\leq |\jump(X_{\W})|\leq |\jump(X_{w})|\cdot |\W|\leq |T\p|\cdot |\W|\leq  2|T\p-n_0-1||\W|,
\end{equation}
where the last inequality holds for  $|\jump(X_{w})|\geq 2(n_0+1)$, 
Lemma \ref{lem: lagrange} implies that
\begin{equation}\nonumber
 \sum_{i=n_0}^{|T\p|-2}  \delta_i\geq a(|T\p|-n_0-1)\rh^{2|\W|}.
\end{equation}
Lemma \ref{lem: thin} implies that 
\begin{equation}\nonumber
 \Ex (|\jump(X_w\p)||X,\text{$w$ is updated})\leq  |T\p|-\sum_{i=n_0}^{|T\p|-2}\delta_i\leq |T\p|(1-a\rh^{2|\W|})+a(n_0+1).
\end{equation}
It is now enough to use $\Ex(|T\p||X,\text{$w$ is updated})\leq |\jump(X_w)|+\mu$ to complete the proof.
\end{proof}

{Note that  the evidence, i.e.\  the trajectory $X_\O([\tmin,\tmax])$ is considered as fixed. 
In particular the constants in Proposition \ref{pr: DriftCTBN} may depend on $X_\O$.  }


The following proposition is an analogue of Proposition \ref{pr: small} in the CTBN setting. Now $A$ denotes the transition kernel of the Markov chain 
defined via Algorithm 3.

\begin{proposition}\label{pr: smallCTBN}
The set $\{X_\W: |\jump(X_\W|)|\leq h\}$ is $|\W|$-small for every $h$, i.e.\ there exists a probability measure $\reg$ and a constant $\beta>0$ such that 
$A^{|\W|}(X_\W,\d X\p_\W)\geq \beta \reg(\d X\p_\W)$,
whenever $|\jump(X_\W)|\leq h$.
\end{proposition}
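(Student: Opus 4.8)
The plan is to imitate the single-step construction of Proposition \ref{pr: small}, but to spread it over a whole Gibbs sweep of length $|\W|$; this is exactly why the set is $|\W|$-small rather than $1$-small, since each step of Algorithm \ref{alg: CTBN} refreshes only one coordinate. Fix once and for all an enumeration $\W=\{w_1,\dots,w_{|\W|}\}$ and, for each node $w$, a deterministic target trajectory $\hat X_w$ described, as in \eqref{eq: regener}, by a fixed skeleton $\ss^{(w)}$ (with $\ss^{(w)}_{i-1}\neq\ss^{(w)}_i$ and all its jumps feasible for $\Qmin_w$) together with a fixed grid $\tt^{(w)}$ on which the jump times are placed uniformly. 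The candidate regeneration measure is the product $\reg=\bigotimes_{w\in\W}\reg_w$, where $\reg_w$ is the single-node measure of \eqref{eq: regener} attached to $(\ss^{(w)},\tt^{(w)})$. I will show that, starting from any $X_\W$ with $|\jump(X_\W)|\leq h$, the event that Algorithm \ref{alg: CTBN} selects $w_1,\dots,w_{|\W|}$ in this order and regenerates each of them to its target has probability at least a constant $\beta>0$ that does not depend on $X_\W$; on this event the output is exactly the fixed product target and is therefore independent of the input, which yields $A^{|\W|}(X_\W,\d X_\W\p)\geq\beta\reg(\d X_\W\p)$.

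The probability that step $j$ of the random scan picks $w_j$ for every $j$ is at least $(\pmin)^{|\W|}$. Conditionally on this, I update node $w_j$ by replaying the three events $\mathcal E_0,\mathcal E_1,\mathcal E_2$ of Proposition \ref{pr: small}: in stage (V) split the thinning Poisson process into $V^0$ (intensity $\epsilon=\eta\qmin$, using Assumptions \ref{as: etactbn} and \ref{as: qminctbn}) and a remainder, and ask that $T\p=\jump(X_{w_j})\cup V^0$ with $V^0$ falling one point into each cell of $\tt^{(w_j)}$; in stage (S1) ask that the old jumps of $w_j$ turn virtual while the points of $V^0$ realise the skeleton $\ss^{(w_j)}$; and in stage (S2) accept in the rejection version of FFBS. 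The lower bound on $\mathcal E_1$ uses Assumption \ref{as: qminctbn}, which bounds the matrices $P_i$ from below by $\Qmin_{w_j}/\rmax$ \emph{uniformly in the parent configuration}, so it is immaterial that the parents of $w_j$ may not yet be at their targets; Assumption \ref{as: supp} guarantees that the prescribed jumps of $\ss^{(w_j)}$ remain feasible for every parent configuration. Exactly as in Proposition \ref{pr: small}, on $\mathcal E_0\cap\mathcal E_1\cap\mathcal E_2$ the updated trajectory of $w_j$ equals $\hat X_{w_j}$ and does not depend on the rest of the network; since Gibbs leaves already-updated nodes untouched, after the full sweep $X_\W\p$ equals $\bigotimes_{w}\hat X_w$ and is distributed according to $\reg$.

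The main obstacle is bounding, \emph{uniformly in the input and in the configuration of the neighbouring nodes}, the acceptance probability of stage (S2) at each update. Unlike in Proposition \ref{pr: small}, where the likelihood contributed the fixed factor $\lsk/\lmax$, here the weights $g_i$ carry the children's factors $p^{\rm jump}$ and therefore, by \eqref{eq: giCTBN}, depend on $|\jump(X_{\ch(w)\cap\W}^{(i)})|$, which is a priori unbounded. The resolution is a jump-count bookkeeping along the sweep: starting from $|\jump(X_\W)|\leq h$ and replacing, at each update, the (nonnegative) number of jumps of one node by the fixed number $n_w:=|\jump(\hat X_w)|$, the total stays bounded by $H:=h+\sum_{w\in\W}n_w$ at every stage. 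Hence the number of inter-jump intervals $|T\p|\leq H+\max_w n_w$ and $\sum_i|\jump(X_{\ch(w_j)\cap\W}^{(i)})|\leq H$ at each update; feeding these into the two-sided bound \eqref{eq: giCTBN} shows that the ratio $\prod_i g_i(\ss^{(w_j)})/\prod_i\big(\tilde a_2\rh_2^{\,|\jump(X_{\ch(w_j)\cap\W}^{(i)})|}\big)$ defining the acceptance is bounded below by a constant depending only on $h$ and the network parameters, and not on the input. Multiplying the $|\W|$ node-selection, $\mathcal E_0$-, $\mathcal E_1$- and $\mathcal E_2$-probabilities gives the required $\beta>0$.

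A final technical point, handled exactly as the factor $\nu(\ss_0)$ in Proposition \ref{pr: small}, is the initial weight $\tilde\nu(\ss_0^{(w_j)})=\nu(\ss_0^{(w_j)}\mid X_{-w_j}(\tmin))$. Since each $\S_w$ is finite this quantity takes finitely many values, so choosing the initial states $\ss_0^{(w)}$ of the targets within the support of $\nu$ (immediate when $\nu$ is a product measure over nodes, as is standard for CTBNs) yields a uniform positive lower bound that is absorbed into $\beta$.
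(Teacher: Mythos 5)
Your construction is sound, but it takes a noticeably heavier route than the paper's. The paper exploits the fact that, with the evidence consisting only of fully observed nodes, nothing forces the hidden trajectories to visit particular states at particular times; it therefore takes the regeneration measure $\reg$ to be the point mass at a \emph{constant} trajectory $\ss_\W$ with $\tilde\nu(\ss_\W)>0$. Each node update then requires only three cheap events: $V=\emptyset$ (so $T\p=\jump(X_w)$), all old jumps turned virtual, and acceptance in the rejection version of (S); there is no $V^0/V^{\rm rest}$ splitting, no time grid, and no target skeleton to realise. Moreover, since an updated node becomes constant, the running total of jumps can only decrease along the sweep, so the acceptance bound $\beta_2=\bigl(\tilde a_1\rh_1/(\tilde a_2\rh_2)\bigr)^h$ follows directly from $\sum_i|\jump(X^{(i)}_{\ch(w)\cap\W})|\leq h$ with no further bookkeeping. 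You instead transplant the full machinery of Proposition \ref{pr: small} (non-degenerate per-node targets, uniform placement of $V^0$ in cells, the jump-count ledger $H=h+\sum_w n_w$); this is all correct and your ledger is exactly the extra ingredient your choice of target makes necessary, but it is unnecessary here. What your version buys is robustness: it is essentially the construction the paper only sketches in Remark \ref{rem: obs} for the case where hidden nodes also carry noisy discrete-time observations, where a constant trajectory could have zero likelihood and the paper's shortcut would fail. Two small points, shared with the paper but worth making explicit in your write-up: on the regeneration event the realised jump times are not conditionally uniform (the probabilities of $\mathcal{E}_1$ and $\mathcal{E}_2$ depend on them), so the minorization should be verified by integrating the uniform-in-$t$ lower bound against the density of $V^0$ rather than by asserting that the output is conditionally distributed as $\reg$; and the initial weight $\nu(\ss_0^{(w)}\mid X_{-w}(\tmin))$ involves the as-yet-unupdated input nodes, so its uniform positivity needs a choice of $\ss_0^{(w)}$ with positive conditional probability for every admissible configuration, not merely membership in the support of a product prior.
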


\begin{proof}
In contrast to the drift condition, the proof of the small set condition is easier in the present setting.
Under the assumptions of Theorem \ref{th: ctbn}, for the regeneration measure $\Phi$ we can take the measure concentrated at 
a deterministic, \textit{constant} trajectory $\ss_\W=(\ss_u,u\in\W)\in \S_W$. The only requirement is that $\tilde\nu(\ss_\W)>0$, 
where $\tilde\nu$ is the posterior initial distribution of $X_\W(\tmin)$, given $X_\O(\tmin)$. We are to bound from below the probability
that $X_\W\p(t)=\ss_\W$, for $t\in[\tmin,\tmax]$, where $X_\W\p$ is the result of $|\W|$ steps of Algorithm 3, starting from an arbitrary $X_\W$ such that
$|\jump(X_\W)\leq h$. 

With probability at least $(\pmin)^{|\W|}$, Algorithm 3  in $|\W|$ steps will visit and update
all nodes belonging to $\W$. Let us now consider a single step, in which $w\in\W$ is updated via Algorithm 2. 
In the rest of the proof $w$ is arbitrary but fixed. $(T\p,S\p)$ denote the times of potential jumps and the skeleton of $X_w\p$. 
Since Assumptions \ref{as: qminctbn}, \ref{as: etactbn}  and \ref{as: rmaxctbn}
of Theorem \ref{th: ctbn} are, for a fixed $w$, essentially the same as Assumptions \ref{as: qmin}, \ref{as: eta}  and \ref{as: rmax}
of Theorem \ref{th: main}, the reasoning is similar as in the proof of Proposition \ref{pr: small}.
We assume that stage (S) of Algorithm 2 is executed in a way described in that proof, via rejection sampling.
We consider the following random events $\mathcal{E}_i$:
\begin{itemize}
 \item $\mathcal{E}_0$: in stage (V) we obtain $V=\emptyset$ so $T\p=\jump(X_w)$. 
 \item $\mathcal{E}_1$: in stage (S1) all points belonging to $\jump(X_w)$ are changed to virtual jumps.
 \item $\mathcal{E}_2$: in stage (S2) we accept the skeleton obtained in stage (S1).       
\end{itemize}
It is easy to obtain the following lower bounds.
\begin{itemize}
 \item $\mathcal{E}_0$ happens with probability at least $\exp\left[-\rmax(\tmax-\tmin)\right]=:\beta_0$, 
 because $V$ is a Poisson process with intensity $R(t;s)-Q(t;s)\leq \rmax$.
 \item Given that $\mathcal{E}_0$ has happened, the probability of $\mathcal{E}_1$ is at least 
   $\eta^{|\jump(X_w)|}\geq \eta^{h}=:\beta_1$ (because $P(T\p_i,s,s)\geq \eta$).
 \item Given that $\mathcal{E}_0$ and $\mathcal{E}_1$ have happened, the probability of $\mathcal{E}_2$ is at least 
 \[\left(\frac{\tilde a_1 \rh_1}{\tilde a_2 \rh_2}\right)^h=:\beta_2,\]
  where $\tilde a_1,\rh_1,\tilde a_2, \rh_2$ are constants appearing in \eqref{eq: giCTBN}. Indeed, the acceptance
  criterion can be $\prod g_i(S_i\p)\leq (\tilde a_2 \rh_2)^h$ and it is always true that  $\prod g_i(S_i\p)\geq (\tilde a_2 \rh_1)^h$ 
  (by \eqref{eq: giCTBN};  note that  
  $\sum_i|\jump(X_{\ch(w)\cap\W}^{(i)})|\leq|\jump(X_\W)|\leq h$ and also the number of summands is at most $|\jump(X_w)|\leq h$).
  
\end{itemize}
Of course, if $\mathcal{E}_0\cap \mathcal{E}_1\cap \mathcal{E}_2$ happens then $X_w\p(t)=\ss_w$ for $t\in[\tmin,\tmax]$.  
Putting everything together, we get $X\p_w([\tmin,\tmax])=\ss_w$ with probability at least 
$\beta:=\pmin\beta_0\beta_1\beta_2>0$. 
In $|\W|$ steps we get $X\p_\W([\tmin,\tmax])=\ss_\W$  with probability at least
$\tilde\nu(\ss_\W)\beta^{|\W|}$. The proof is complete. 
\end{proof}

Theorem \ref{th: ctbn} follows from Propositions \ref{pr: DriftCTBN} and \ref{pr: smallCTBN}. 

{

\begin{remark}\label{rem: const} In this paper the focus is on qualitative results. The constants in our  bounds are chosen in a way
which makes presentation clearer, and we did not attempt to optimize them.
\end{remark}

\begin{remark}\label{rem: obs}
For clarity of presentation we have proved  Theorem \ref{th: ctbn} under the  assumption that some nodes of CTBN are fully observed. However, by minor 
modification of the proofs we can establish geometric ergodicity of the Rao and Teh's algorithm in a more general case. Our results remain true if we assume
that some nodes are only partially observed with random noise at discrete moments, just as in \eqref{eq: obs}.
Clearly, for a drift condition,  the likelihood part can be treated in the same way as in proof of Proposition~\ref{pr: drift}. For a small set condition, 
we can repeat the construction from the proof of 
Proposition~\ref{pr: small} for every node $w\in\N$, and then define  the regeneration measure for whole network as a product of regeneration measures for
single nodes. The proofs of the key propositions in the more general case are not essentially different but become notationally complicated and awkward.
For this reason they are omitted.
\end{remark}
}

\bibliographystyle{imsart-nameyear}

\bibliography{refs}

\end{document}